\newlength{\dinwidth}
\newlength{\dinmargin}
\newcommand{\be}{\begin{equation}}
\newcommand{\ee}{\end{equation}}
\newcommand{\ba}{\begin{eqnarray}}
\newcommand{\ea}{\end{eqnarray}}
\newtheorem{definition}{Definition}
\newtheorem{theorem}{Theorem}
\newtheorem{remark}{Remark}
\newtheorem{example}{Example}
\def\e{{e}}
\begin{document}
\title{Enumeration of $N$-rooted maps using quantum field theory}
\author{K. Gopala Krishna$^1$, Patrick Labelle$^2$, and Vasilisa Shramchenko$^3$}
\date{}
\maketitle
\footnotetext[1]{Department of mathematics and statistics, Concordia University, E-mail: {\tt gopala.krishna@concordia.ca}}
\footnotetext[2]{Champlain Regional College, Lennoxville campus, Sherbrooke, Quebec, Canada. E-mail: {\tt plabelle@crc-lennox.qc.ca}}
\footnotetext[3]{Department of mathematics, University of
Sherbrooke, 2500, boul. de l'Universit\'e,  J1K 2R1 Sherbrooke, Quebec, Canada. E-mail: {\tt  Vasilisa.Shramchenko@Usherbrooke.ca}}
 
 \vspace{-0.7cm}
 
 \begin{abstract}
A one-to-one correspondence is proved between the $N$-rooted ribbon graphs, or maps,  with $\e$ edges and the $(\e-N+1)$-loop Feynman diagrams of a certain quantum field theory. This result is used to obtain explicit expressions and relations for the generating functions of $N$-rooted maps and for the numbers of $N$-rooted maps with a given number of edges using  the path integral approach applied to the corresponding quantum field theory.  
 \end{abstract}

\tableofcontents


   \tikzset{
    photonloop/.style={decorate, decoration={snake,amplitude=1.5pt,segment length=4 pt},draw=black},
        photon/.style={decorate, decoration={snake},draw=black},
    electron/.style={draw=black, postaction={decorate},
        decoration={markings,mark=at position .75 with {\arrow[draw=black,scale=2]{>}}}},
         electronf/.style={draw=black, postaction={decorate},
        decoration={markings,mark=at position .15 with {\arrow[draw=black,scale=2]{>}}}},
              electronff/.style={draw=black, postaction={decorate},
        decoration={markings,mark=at position .25 with {\arrow[draw=black,scale=2]{>}}}},
          electronfff/.style={draw=black, postaction={decorate},
        decoration={markings,mark=at position .85 with {\arrow[draw=black,scale=2]{>}}}},
          electronmiddle/.style={draw=black, postaction={decorate},
        decoration={markings,mark=at position .55 with {\arrow[draw=black,scale=2]{>}}}},
    positron/.style={draw=black, postaction={decorate},
        decoration={markings,mark=at position .55 with {\arrow[draw=black,scale=2]{<}}}},
    gluon/.style={decorate, draw=magenta,
        decoration={coil,amplitude=4pt, segment length=5pt}},
        fermion/.style={draw=black, postaction={decorate},decoration={markings,mark=at position .55 with {\arrow[draw=black,scale=2]{>}}}},
  vertex/.style={draw,shape=circle,fill=black,minimum size=3pt,inner sep=0pt}
}

\NewDocumentCommand\semiloop{O{black}mmmO{}O{above}}
{
\draw[#1] let \p1 = ($(#3)-(#2)$) in (#3) arc (#4:({#4+180}):({0.5*veclen(\x1,\y1)})node[midway, #6] {#5};)
}

\NewDocumentCommand\semiloopt{O{black}mmmO{}O{above}}
{
\draw[#1] let \p1 = ($(#3)-(#2)$) in (#3) arc (#4:({#4+245}):({0.5*veclen(\x1,\y1)})node[midway, #6] {#5};)
}

\NewDocumentCommand\semilooptt{O{black}mmmO{}O{above}}
{
\draw[#1] let \p1 = ($(#3)-(#2)$) in (#3) arc (#4:({#4+285}):({0.3*veclen(\x1,\y1)})node[midway, #6] {#5};)
}

\section{Introduction}

\noindent Enumeration of rooted maps started with the work by Tutte \cite{Tutte1, Tutte2} on counting planar maps, followed by \cite{WalshLehman1, WalshLehman2, WalshLehman3} and \cite{JacksonVisentin, Bender, Bender2} in arbitrary genus, and more recent results, see \cite{ArquesBeraud, Chapuy, Chapuy2} and references therein.  Beginning with the seminal work of t'Hooft \cite{thooft} on the applications of matrix integrals to Yang-Mills gauge theories and in particular quantum chromodynamics, maps have become a major tool in  quantum field theory and in string theory. See for example \cite{Eynard} for a  review on matrix models and the enumeration of maps.

\vskip 0.2cm

\noindent In this article, we consider rooted ribbon graphs, that is graphs embedded into a compact oriented surface in such a way that each face is a topological disc and one half-edge is distinguished, that can also be seen as rooted maps. The main aim of this article is to introduce $N$-rooted ribbon graphs extending the notion of rooted ribbon graphs, where $N$ distinct vertices of the graph are rooted, and to solve the enumeration problem for these graphs. This is a continuation of \cite{ArquesBeraud, Tutte1, Tutte2,  WalshLehman1, WalshLehman2, WalshLehman3} where the corresponding enumeration problem for $1$-rooted ribbon graphs was solved.
 Our main idea is to apply methods of quantum field theory to enumeration of graphs. This is possible due to the bijection that we establish between $N$-rooted maps and Feynman diagrams for $2N$-point function in a quantum field theory of two interacting scalar (spin 0) fields, as defined in section \ref{sect_Wick}. We will refer to this theory as scalar quantum electrodynamics (scalar QED) to follow the notation of \cite{prd},  even though this is an abuse of language because our theory does not contain a spin one gauge field.
\color{black}

\vskip 0.2cm

\noindent The 
 coincidence of \color{black} the number of two-point Feynman diagrams with regard to the perturbative order in scalar QED  and the number of rooted maps as a function of the number of edges has already been observed and in \cite{fdiag} an intuitive association between the two objects was proposed. This was verified up to third order by starting from the Feynman diagrams and using the proposed association to generate the corresponding rooted maps. However, the formal proof of a bijection between the two classes of objects to all orders has not yet appeared. Verifying the bijection explicitly to higher order becomes impractical very quickly owing to the rapid increase in the number of Feynman diagrams and rooted maps at higher orders. For instance, the number of Feynman diagrams with $4$ loops or rooted maps with $4$ edges is $706$, while at $5$ loops or edges there are $8162$ diagrams or graphs. 


\vskip 0.2cm

\noindent In this paper, we prove the equality of the number of two-point Feynman diagrams in  scalar QED and the number of rooted maps in two ways. First we  notice that the differential equation for the number of rooted maps as function of the number of edges derived in \cite{ArquesBeraud} coincides with the differential equation from quantum field theory that govern the number of two-point Feynman diagrams. Our second proof establishes the direct bijective correspondence between Feynman diagrams in question and the rooted maps by using Wick's theorem from quantum field theory. \color{black} The use of Wick's theorem and the technique of ribbon graphs turns out to be the formalization of the correspondence put forward in \cite{fdiag}. 
 It is this second proof that admits a generalization to the general case of $2N$-point Feynman diagrams and leads naturally to the definition of $N$-rooted maps. We thus prove the correspondence observed in \cite{fdiag} and generalize it to the bijection between $2N$-point Feynman diagrams and $N$-rooted maps; this is one of the main results of the present article and the statement of Theorem \ref{thm_bijection}. 

\vskip 0.2cm

\noindent We would like to point out that the correspondence between rooted ribbon graphs and Feynman diagrams that we establish here is very different from the one obtained via the matrix model approach and widely exploited after the seminal work of 't Hooft \cite {thooft}. In \cite{A} and \cite{B} an approach similar to ours was used for the vacuum diagrams of the QED theory we consider here.

\vskip 0.2cm

\noindent The result of Theorem \ref{thm_bijection} allows us to use the methods of quantum field theory to enumerate $N$-rooted graphs. Thus we  find the number $m_N(\e)$ of $N$-rooted graphs with given number $\e$ of edges by first solving the corresponding enumeration problem for connected $2N$-point Feynman diagrams along the lines of \cite{prd}. This appears to be a very powerful approach as it reproduces the result of \cite{ArquesBeraud} for enumeration of one-rooted maps ($N=1$) with little effort as explained in  Section \ref{sect_onerooted_count}. In addition to bypassing the laborious derivation from \cite{ArquesBeraud} of the formula for the number $m_1(\e)$ by recursively reconstructing rooted graphs from simpler rooted graphs, our approach yields a closed form expression for generating functions of numbers of more general $N$-rooted graphs. 

\vskip 0.2cm

\noindent More precisely, introducing the generating function for the numbers $m_N(\e)$ by
\begin{equation*}
M_N(\lambda) = \sum_{\e=0}^\infty m_N(\e)\, \lambda^{2\e}\,,
\end{equation*}
we find the second main result of the paper, formulated in Theorem \ref{thm_thm2}, namely the following closed form algebraic expression for these functions with $N\geq 1$: 
\begin{equation*}  
  M_N (\lambda) =  \sum_{\substack{\alpha_1 + 2 \alpha_2 + \ldots +N \alpha_N = N \\ \alpha_1 \ldots \alpha_N \geq 0 } }~ ~
  \frac{N!}{\alpha_1! \alpha_2 ! \ldots \alpha_N!} ~ 
    \frac{ (-1)^{\alpha_1 + \ldots +\alpha_N -1} ~ \left( \alpha_1 + \ldots  +\alpha_{N}-1 \right)!}{{\cal{Z}}_0^{\alpha_1 + \ldots + \alpha_N}} 
   ~  \prod_{1 \leq j \leq N}  \left( \frac{{\cal{Z}}_{j}}{  (j!)^2} \right)^{\alpha_j}\, , 
\end{equation*}
where
\begin{equation*}
{\cal{Z}}_j =  \sum_{k=0}^\infty ~ \frac{(2k+j)! \, (2k-1)!!}{(2k)!} ~ \lambda^{2k} \,, \qquad j\geq 0. 
\end{equation*}

\vskip 0.2cm

\color{black}

\noindent Finally, in Theorem $3$ it is shown that the generating functions $M_N(\lambda)$ for $N$-rooted maps are degree $N$ polynomial expressions with $\lambda$-dependent coefficients in the $M_1(\lambda)$. 

\color{black}
\noindent The paper is organized as follows. In Section \ref{sect_definitions} we collect known definitions relevant to rooted ribbon graphs as well as introduce the definition of $N$-rooted ribbon graphs. We also define a generating function $M_N$ of the numbers of $N$-rooted ribbon graphs with a given number of edges and review relevant known results on the generating function in the case $N=1$. In Section \ref{sect_Wick} we describe the class of Feynman diagrams studied in the paper and explain the statement of Wick's theorem allowing to generate all possible diagrams of the quantum field theory in question. In Section \ref{sect_bijection} we prove the bijection between the Feynman diagrams of our quantum field theory with $N$ external electron lines and $l$ loops on one side and $N$-rooted ribbon graphs with $l+N-1$ edges on the other.  In Section \ref{sect_QFT} we explain basic principles of the  path integration approach for the relevant quantum field theory. In Section \ref{sect_count} we re-introduce the generating functions for Feynman diagrams using the path integration technique. In Section \ref{sect_onerooted_count} we apply the theory developed in two preceding sections to rederive the known results of \cite{ArquesBeraud} on the generating function for one-rooted maps. In Section \ref{sect_Nrooted_count} we derive a closed form expression for the generating function of $N$-rooted ribbon graphs, or $N$-rooted maps, using the technique of path integration of the described quantum field theory. Furthermore, in the case $N=2$ and $N=3$ we do the calculation leading to a closed form formula for the number of $N$-rooted maps as a function of the number of edges. This calculation presents an algorithm that can be extended to an arbitrary given $N$ in a straightforward way. Finally, in Section \ref{sect_equation} we prove that the generating function $M_N(\lambda)$ of the numbers of $N$-rooted maps can be expressed as a degree $N$ polynomial in $M_1(\lambda)$. In the Appendix, we show a quantum field theory derivation of a differential equation which plays a central role in the proof of Theorem 3.

\noindent {\bf Acknowledgements.} 
K.G. wishes to thank Dmitry Korotkin, Marco Bertola as well as the staff and members at Concordia University for the support extended during his stay.  P.L. gratefully acknowledges the support from  the Fonds  de recherche du Qu\'ebec - Nature et technologies (FRQNT)  via a grant from the Programme de recherches pour les chercheurs de coll\`ege and to the STAR research cluster of  Bishop's University.  V.S. is grateful for the support from the Natural Sciences and Engineering Research Council of Canada through a Discovery grant as well as from the University of Sherbrooke. P.L. and V.S.  thank the  Max Planck Institute for Mathematics in Bonn, where this work was initiated, for hospitality and a perfect working environment.

\section{$N$-rooted graphs}
\label{sect_definitions}

A map is a cellular graph, that is a graph embedded into a connected compact orientable surface in such a way that each face is homeomorphic to an open  disc. The orientation of the underlying surface leads to a cyclic (counterclockwise) ordering on the half-edges incident to each vertex of a map. The notion of a map is equivalent to that of a ribbon graph; we use these two terms interchangeably.

\label{definitions}
\begin{definition}
A \emph{ribbon graph}, or a map, is the data $\Gamma = (H, \alpha,\sigma)$ consisting of  a set of half-edges $H = \{1,\dots, 2e\}$ with $e$ a positive integer and two permutations $\alpha, \sigma \in S_{2e}$ on the set of half-edges such that
	\begin{itemize}
	\item $\alpha$ is a fixed point free involution,
	\item the subgroup of $S_{2e}$ generated by $\alpha$ and $\sigma$ acts transitively on $H$. 
	\end{itemize}
\end{definition}

\noindent The involution $\alpha$  is a set of transpositions each of which pairs two half-edges that form an edge. Cycles of the permutation $\sigma$ correspond to vertices of the ribbon graph $\Gamma$; each cycle gives the ordering of half-edges at the corresponding vertex. Cycles of the permutation $\sigma^{-1}\circ\alpha$ correspond to faces of $\Gamma$. The condition of transitivity of the group $\langle \sigma, \alpha \rangle$ on the set of half-edges ensures the connectedness of the graph $\Gamma$. 

\vskip 0.2cm

\noindent A ribbon graph defines a connected compact orientable surface. This surface is reconstructed by gluing discs to the faces of the ribbon graph. The genus of the surface is called the genus of the ribbon graph.  Let us denote the set of vertices of a map (a ribbon graph) $\Gamma = (H, \alpha,\sigma)$ by $V$ and the set of faces by $F$. Recall that the set $V$ is in a bijection with the set of cycles of the permutation $\sigma$ and the set $F$ is in a bijection with the cycles of $\sigma^{-1}\circ\alpha$. Associated to each map is its {\it Euler characteristic} defined by 
\begin{equation*}
\chi(\Gamma) = \mid V\mid - \mid E\mid + \mid F\mid.
\end{equation*}
\noindent The Euler characteristic of a map is an invariant of the map, it depends only on the genus $g$ of the map and is given by $\chi(\Gamma) = 2-2g$. \\

\begin{definition}
\label{def_ribbon_iso}
An isomorphism between two ribbon graphs $\Gamma = (H, \alpha,\sigma)$ and $\Gamma' = (H, \alpha',\sigma')$ is a permutation $\psi\in S_{2e}$, that is $\psi:H\to H$, such that $\alpha'\circ \psi = \psi\circ \alpha$ and $\sigma'\circ \psi = \psi\circ \sigma$.
\end{definition}

\noindent Two isomorphic ribbon graphs are identified.  For a given graph $\Gamma=(H, \alpha,\sigma)$, the automorphisms are permutations on the set of half-edges, $\psi\in S_{2e}\,$ which commute with $\sigma$ and $\alpha$.

\vskip 0.2cm

\noindent In terms of embeddings into a surface, two maps are equivalent if they can be transformed one into another by a homeomorphism of the underlying surface. For example, two maps $1$ and $2$ in Figure \ref{fig:torus} are equivalent, both corresponding to the ribbon graph $3$ from the same figure.

  \begin{figure}
\centering
\includegraphics[scale=0.4]{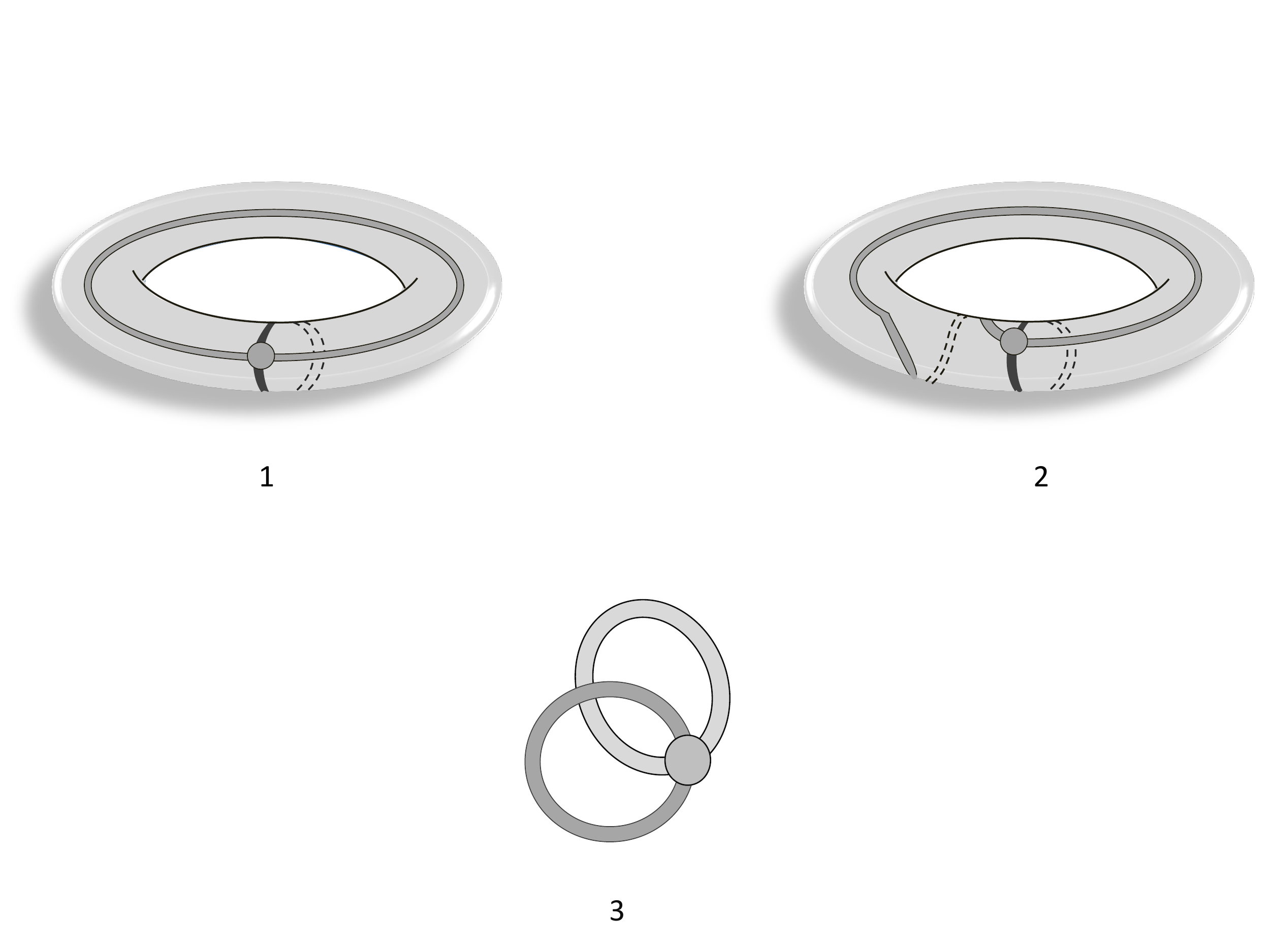}
\caption{A genus one map.}
\label{fig:torus}
\end{figure}

\noindent Automorphisms of ribbon graphs make their enumeration difficult and a way to deal with this is to introduce the idea of a rooted ribbon graph. 
\begin{definition}
\label{def_rooted}
A {\rm rooted  graph} is a ribbon graph with a distinguished half-edge, 
the {\rm root} of the graph. The  vertex to which the root is incident is called the 
 \emph{root vertex}.
\end{definition}
\begin{remark}
If there are no half-edges, the graph consists of one point. This is also considered as a rooted graph. 
\end{remark}
\noindent 
An isomorphism between two rooted ribbon graphs is an isomorphism of the ribbon graphs that maps the root to the root, a {\it rooted isomorphism} between the two graphs. 
For a given rooted graph, the group of its rooted automorphisms is trivial, see \cite{WalshLehman1}. 
\vskip 0.2cm
 
\noindent We propose the following generalization of the concept of a rooted graph. 
\begin{definition}
\label{def_Nrooted}
An $N$-rooted graph is the data of a ribbon graph, $\Gamma = (H, \alpha, \sigma)$, with the choice of $N$ distinct ordered elements    $\hat h_1,\dots, \hat h_N$ of $H$, called \emph{root half-edges}, or \emph{roots}, belonging to $N$ distinct cycles of $\sigma$, that is incident to $N$ distinct ordered vertices, called \emph{root vertices}.
\end{definition}

\noindent In other words, an $N$-rooted graph is obtained from a ribbon graph by choosing $N$ distinct vertices, labelling them with numbers from $1$ to $N$, and at each of the chosen vertices placing an arrow on one of the half-edges incident to it. In Figure \ref{fig:rooted}   examples of  two-rooted graphs of genus zero are shown. \\

  \begin{figure}
\centering
\includegraphics[scale=0.4]{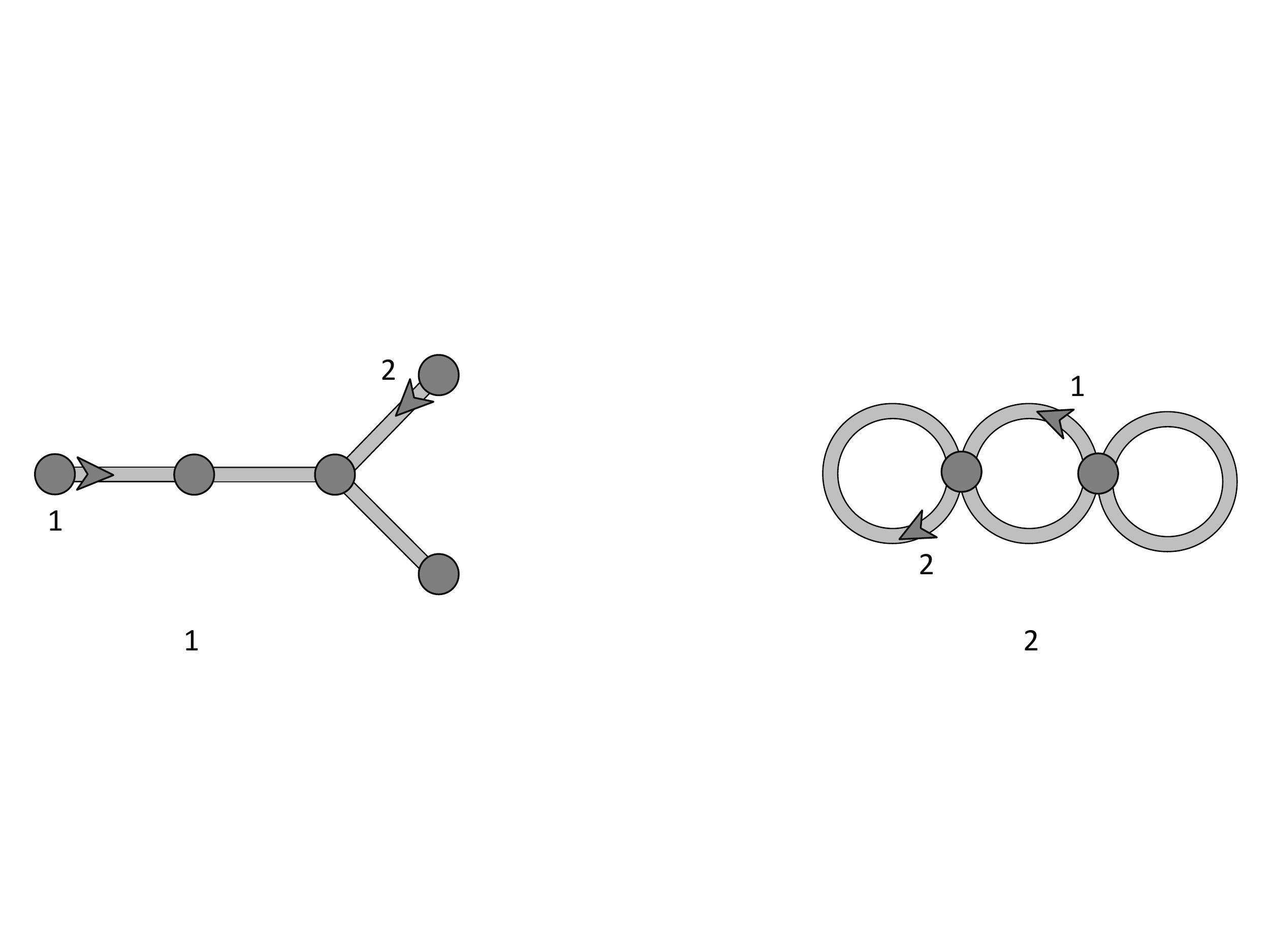}
\caption{Two-rooted graphs.}
\label{fig:rooted}
\end{figure}

\noindent 
An isomorphism between two $N$-rooted ribbon graphs is an isomorphism of the ribbon graphs that preserves the labelling of the $N$ root vertices and maps roots to roots.
We call such an isomorphism an {\it $N$-rooted isomorphism} between the two graphs. Similarly, an $N$-rooted automorphism of an $N$-rooted graph is an automorphism of the underlying ribbon graph which preserves the set of $N$ root vertices pointwise and maps roots to roots. Clearly, the only $N$-rooted automorphism of an $N$-rooted graph is the identity.

\subsection{Enumerating rooted ribbon graphs}
\label{sect_M1}
We are interested in computing the number of $N$-rooted ribbon graphs and
 for that it is useful  to arrange the graphs by the number of edges and to define the generating function $M_N(\lambda)$   such that 
\begin{equation}
M_N(\lambda) = \sum_{\e=0}^\infty m_N(\e)\, \lambda^{2\e}, \label{mnz}
\end{equation}
where $m_N(\e)$ is the number of $N$-rooted graphs with $\e$ edges regardless of genus. 
 The generating function $ M_1(\lambda)$ of the single rooted ribbon graphs  has been well studied in the literature, see \cite{ArquesBeraud}. From that reference, we have

\begin{equation*}
M_1(\lambda) = 1 + 2 \, \lambda^2 + 10 \,  \lambda^4 + 74  \, \lambda^6 + 706  \, \lambda^8+ 8 \, 162  \, \lambda^{10} + 110 \, 410 \, \lambda^{12} + \ldots \;.
\end{equation*}

\noindent Arqu\`es and B\'eraud \cite{ArquesBeraud}  also showed that the generating function for one-rooted graphs satisfies the following differential equation\footnote{The different form of this equation in \cite{ArquesBeraud}, namely $ M_1(z) = 1 + z  M_1(z)^2 + z  M_1(z) + 2 z^2 \ \dfrac{\partial  M_1(z)}{\partial z},$ is due to their use of another variable, $z=\lambda^2$,  in definition \eqref{mnz} of the generating function.  }:

\begin{equation}\label{M1DifferentialEquation}
 M_1(\lambda) = 1 + \lambda^2  M_1(\lambda)^2 + \lambda^2  M_1(\lambda) + \lambda^3 \ \dfrac{\partial  M_1(\lambda)}{\partial \lambda}\;.
\end{equation}

\noindent This equation is similar to the one obtained by Tutte for planar maps \cite{Tutte1}. It is a recursive relation where one-rooted ribbon graphs with a given number of edges are constructed recursively from one-rooted ribbon graphs with fewer edges. 
\noindent The above recursive relation leads to an expression for the number of rooted maps with $\e$ edges \cite{ArquesBeraud}: 
\begin{equation}\label{M1e}
m_1(\e) = \frac{1}{2^{\e+1}} \sum_{i=0}^{\e} (-1)^i \sum_{\substack{k_1+\cdots + k_{i+1}= \e+1 \\ k_1, \ldots, k_{i+1} > 0}} \prod_{j=1}^{i+1}  \dfrac{(2k_j)!}{k_j!}. 
\end{equation}

\section{ Feynman diagrams and Wick's theorem}
\label{sect_Wick}

\noindent Feynman diagrams are graphical tools that represent physical processes occurring in quantum field theories. The diagrams depend on the quantum field theory in question and the field content of the theory. To each field in the quantum field theory is associated a distinct type of edge, and interactions between the various kinds of fields are captured by vertices.
\vskip 0.2cm
  
 \noindent The particular quantum field theory we need in the present work involves a neutral scalar ({\it{i.e.}}  spin zero) field $A$ and a complex charged scalar field $\phi$ whose complex conjugate  will be denoted by $\phi^*$. This can be thought  of as a scalar version of quantum electrodynamics (QED) and by  abuse of language we will refer to the $A$ field as the photon field and to the $\phi$ field as the electron field, following \cite{prd}.
  
  \vskip 0.2cm
  
\noindent In our theory the propagation of the photon field will be denoted by a wavy line, the propagation of an electron by a solid, oriented line,  and there will be only one type of vertex where one photon line ends on  an electron line. An electron line can either be a loop oriented counterclockwise or a vertical line extending to infinity on both ends  and oriented upwards. The lines that extend to infinity will be referred to as external lines and the others will be referred to as internal lines.  Any deformations of the photon and electron lines preserving the orientations without cutting them or having a vertex pass through another vertex, including moving the lines across each other, leaves the diagram unchanged. For example, the diagrams shown in Figure \ref{fig:equiv} are equivalent, while the diagrams shown in Figure \ref{fig:diff}  are different. In \cite{fdiag} all the Feynman diagrams with one external electron line and up to six vertices are explicitly shown.

  \begin{figure}
\centering
   \begin{center}
   \begin{tikzpicture}[node distance=1cm and 1cm]
\coordinate[] (v1);
\coordinate[below=1.05 cm of v1] (v2);
\coordinate[above=1.05 cm of v1] (v3);
\coordinate[right=1.05 cm of v1] (v4);
\coordinate[right= 1.05 cm of v4] (v5);
\coordinate[right= 0.55 cm of v4] (v6);
\coordinate[above= 0.55 cm of v6] (v7);
\coordinate[below= 0.55 cm of v6] (v8);
\draw[electron] (v2) -- (v3);
\draw[photonloop] (v1) -- (v4);
\draw[photonloop](v7) -- (v8);
\semiloop[electron]{v4}{v5}{0};
\semiloop[electron]{v5}{v4}{180};
\node[below=1.40 cm of v4] {\textit{(a)}};
\end{tikzpicture}
\hspace{2cm}   
  \begin{tikzpicture}[node distance=1cm and 1cm]
\coordinate[] (v1);
\coordinate[below=1.05 cm of v1] (v2);
\coordinate[above=1.05 cm of v1] (v3);
\coordinate[right=1.05 cm of v1] (v4);
\coordinate[right= 1.05 cm of v4] (v5);
\coordinate[right= 0.55 cm of v4] (v6);
\coordinate[above= 0.55 cm of v6] (v7);
\coordinate[below= 0.55 cm of v6] (v8);
\draw[electron] (v2) -- (v3);
\draw[photonloop] (v1) -- (v4);
\semiloop[electron]{v4}{v5}{0};
\semiloop[electron]{v5}{v4}{180};
\semiloopt[photonloop]{v7}{v8}{210};
\node[below=1.40 cm of v4] {\textit{(b)}};
\end{tikzpicture}
\caption{These two graphs represent the same Feynman diagram.} \label{fig:equiv}
\end{center}   
   \end{figure}
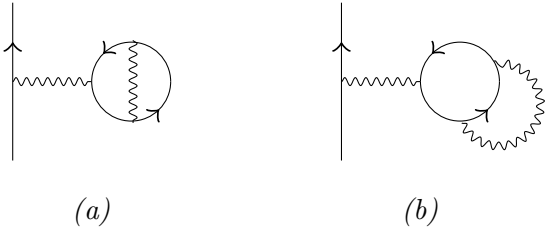

  \begin{figure}
\centering
   \begin{center}
   \begin{tikzpicture}[node distance=1cm and 1cm]
\coordinate[] (v1);
\coordinate[below=1.05 cm of v1] (v2);
\coordinate[below=0.80 cm of v1] (v9);
\coordinate[above=0.80 cm of v1] (v10);
\coordinate[above=1.05 cm of v1] (v3);
\coordinate[right=1.05 cm of v1] (v4);
\coordinate[right=0.30  cm  of v4] (v11);
\coordinate[above=0.60 cm of v11] (v13);
\coordinate[right=0.30  cm  of v4] (v12);
\coordinate[below=0.60  cm  of v12] (v14);
\coordinate[right= 1.60 cm of v4] (v5);
\coordinate[right= 0.80  cm of v4] (center);
\coordinate[above= 0.80 cm of center] (v7);
\coordinate[below= 0.80 cm of center] (v8);
\draw[photonloop] (v13) -- (v14);
\draw[electron] (v2) -- (v3);
\draw[photonloop] (v10) -- (v7);
\draw[photonloop] (v9) -- (v8);
\semiloopt[electron]{v4}{v5}{0};
\semiloopt[electron]{v5}{v4}{180};
\node[below=1.40 cm of v4] {\textit{(a)}};
\end{tikzpicture}
\hspace{2cm}   
  \begin{tikzpicture}[node distance=1cm and 1cm]

\coordinate[] (v1);
\coordinate[below=1.05 cm of v1] (v2);
\coordinate[below=0.80 cm of v1] (v9);
\coordinate[above=0.80 cm of v1] (v10);
\coordinate[above=1.05 cm of v1] (v3);
\coordinate[right=1.05 cm of v1] (v4);
\coordinate[right=1.30  cm  of v4] (v11);
\coordinate[above=0.60 cm of v11] (v13);
\coordinate[right=1.30  cm  of v4] (v12);
\coordinate[below=0.60  cm  of v12] (v14);
\coordinate[right= 1.60 cm of v4] (v5);
\coordinate[right= 0.80  cm of v4] (center);
\coordinate[above= 0.80 cm of center] (v7);
\coordinate[below= 0.80 cm of center] (v8);
\draw[photonloop] (v13) -- (v14);
\draw[electron] (v2) -- (v3);
\draw[photonloop] (v10) -- (v7);
\draw[photonloop] (v9) -- (v8);
\semiloopt[electron]{v4}{v5}{0};
\semiloopt[electron]{v5}{v4}{180};
\node[below=1.40 cm of v4] {\textit{(b)}};
\end{tikzpicture}
\caption{These two graphs represent different Feynman diagrams.} \label{fig:diff}
\end{center}   
   \end{figure}
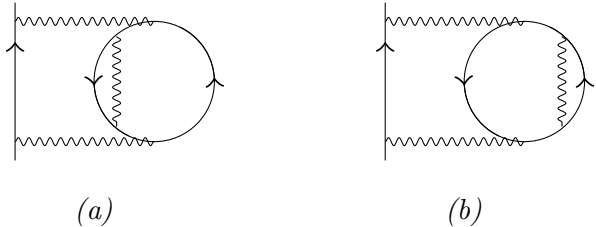

\vskip 0.2cm 

\noindent For the present work we will not need to consider Feynman diagrams with external photon lines, for reasons that will become clear  below.

\vskip 0.2cm

\noindent Photon and electron lines are connected to each other by the rules formalized by the Wick theorem which we now explain.  

\vskip 0.2cm

\noindent  In quantum field theory,  Feynman diagrams may  be generated using either path integrals or canonical quantization. We will use path integration in Section \ref{sect_QFT} to obtain generating functions for the numbers of  Feynman diagrams but for the present section, canonical quantization is more useful. 

\vskip 0.2cm

\noindent  In canonical quantization,  these fields  are operators acting on Fock space and obeying prescribed commutation relations, see for example \cite{weinberg} for details. The distinction between the operators  $\phi$ and $\phi^*$ provides the orientation of the corresponding lines in the Feynman diagram.  Each $(A \phi^* \phi)$ factor  corresponds to a vertex where a photon line terminates on an electron line. 

\vskip 0.2cm

\noindent  In this approach, Feynman diagrams are obtained by considering correlation functions which, for $N$ external electron lines and $v$ vertices, take the following form:
  \be
 \bigl{\langle} (\phi^*)^N  \bigl| ~  ( A_1 \phi_1 ^* \phi_1 ) \ldots ( A_v \phi_v^* \phi_v) ~\bigr| \phi^N  \bigr{\rangle } \,  \label{correl}
  \ee 
  where the $N$ fields $\phi$ in the ket $\bigr| \phi^N  \bigr{\rangle }$ represent $N$ electron fields  in the infinite past (in the infinite negative  direction along the corresponding electron lines) and the  $N$ fields $\phi^*$ in the bra $ \bigl{\langle} (\phi^*)^N  \bigl|$ represent $N$ electron fields  in the infinite future. These correlators correspond to Feynman diagrams with N external electron lines but we  also refer to them as $2N$-point electron correlation functions or, for short, $2N$-point functions because of the $2N$  electron fields appearing in the bra and ket.  Due to the type of interaction we are considering, the numbers of electrons in the bra and in the ket must be the same, otherwise the correlation function is trivially zero. 
 
 \vskip 0.2cm
 
\noindent   These correlation functions are associated to Feynman diagrams 
  as  dictated by  Wick's theorem, which amounts to the following rules:
\begin{itemize}

\item Each vertex $(A_i \phi_i^* \phi_i)$ is represented by a trivalent vertex with a solid outgoing half-edge corresponding to $\phi_i$, a solid incoming edge corresponding to $\phi_i^*$, and one non-oriented wavy edge corresponding to the field $A_i$;

\item One must consider all possible pairings of an $A$ field with another $A$ field in the correlator (\ref{correl}) and a $ \phi$ field with a $\phi^*$ field, leaving no field unpaired. The Feynman diagram is then obtained by the corresponding pairing of the half-edges of the trivalent vertices.
\end{itemize}  
  
  \vskip 0.2cm
  
\noindent Thus   each pairing  of $A$ fields  corresponds to a wavy line  in the Feynman diagram (or a photon  propagator, in the language of physics), and each pairing of $\phi $ and $\phi^*$ corresponds to an oriented solid line (an electron propagator). The pairing with the fields in the bra and ket correspond to external  lines, going to infinity. 

\vskip 0.2cm

\noindent   When all fields have been paired in the correlator (\ref{correl}) in some way, we will refer to the result as a Wick contraction. Two Wick contractions are identified if they differ only by a relabelling of the vertices $(A_i \phi_i^* \phi_i)$.
  
  \vskip 0.2cm
  
\noindent   Note that since there are no external photon fields, in other words there are no  A-fields in the bra and ket, the number of vertices has to be even for the correlation function to be non-vanishing and the number of photon lines (wavy edges) is then half the number of vertices, $\e = v/2$.
  
\vskip 0.2cm

\noindent    The Feynman rules of quantum field theory assign expressions to each Feynman diagram but these will not be needed here as we are only interested in drawing and counting Feynman diagrams.
  
  \vskip 0.2cm
  
\noindent  For example, for $N=1$ and $v=2$, we obtain the following four possibilities, illustrated in Figure \ref{fig:M1}  (there is no special meaning to drawing the pairings below or above, the two are used to make it easier to read the expressions):

  \ba   &~&
 (a)~~~ ~
    \contraction{ \bigl{\langle}}{\phi}{^*\bigl|~ (A_1 \phi_1^*}{\phi}
  \contraction[2ex]{  \bigl{\langle} \phi^*\bigl| ~  ( A_1   }{\phi}{_1^* \phi_1) ~ ( A_2 \phi_2^* \phi_2) ~ \bigr| }{\phi}
 \bcontraction{ \bigl{\langle} \phi^*\bigl|~(}{A}{_1 \phi_1^* \phi_1 )~(}{A}
 \bcontraction{ \bigl{\langle} \phi^*\bigl| ~  ( A_1 \phi_1^* \phi_1) ~  (A_2}{\phi}{_2^*}{\phi}
  \bigl{\langle} \phi^*\bigl| ~  ( A_1 \phi_1^* \phi_1) ~ ( A_2 \phi_2^* \phi_2)~ \bigr| \phi  \bigr{\rangle } 
  \,, ~~~~~~~(b)~~~~
  \contraction{ \bigl{\langle}}{\phi}{^*\bigl|~ (A_1 \phi_1^*}{\phi}
  \contraction[2ex]{  \bigl{\langle} \phi^*\bigl| ~  ( A_1 }{\phi}{_1^* \phi_1) ~  (A_2 \phi_2^*}{\phi}
 \bcontraction{ \bigl{\langle} \phi^*\bigl|~(}{A}{_1 \phi_1^* \phi_1 )~(}{A}
 \bcontraction{ \bigl{\langle} \phi^*\bigl| ~ (  A_1 \phi_1^* \phi_1) ~  (A_2}{\phi}{_2^* \phi_2 )~ \bigr|}{\phi}
  \bigl{\langle} \phi^*\bigl| ~  ( A_1 \phi_1^* \phi_1 )~  (A_2 \phi_2^* \phi_2) ~ \bigr| \phi  \bigr{\rangle } 
  \,,  \nonumber \\ \nonumber  \\ \nonumber   \\ \nonumber  \\ &~& (c)~~~~
     \contraction{ \bigl{\langle}}{\phi}{^*\bigl|~ (A_1 \phi_1^*   \phi_1 )~  (A_2 \phi_2^* \phi_2 )~ \bigr|   }{\phi}
  \contraction[2ex]{  \bigl{\langle} \phi^*\bigl| ~  ( A_1   }{\phi}{_1^* \phi_1) ~  (A_2 \phi_2^*       }{ \phi }
 \bcontraction{ \bigl{\langle} \phi^*\bigl|~(}{A}{_1\phi_1^* \phi_1) ~(}{A}
 \bcontraction[2ex]{ \bigl{\langle} \phi^*\bigl| ~  ( A_1 \phi_1^*}{\phi}{_1)~ ( A_2}{\phi}
  \bigl{\langle} \phi^*\bigl| ~   (A_1 \phi_1^* \phi_1) ~ ( A_2 \phi_2^* \phi_2 )~ \bigr| \phi  \bigr{\rangle } 
  , ~~~~~~~(d)~~~~  
    \contraction{ \bigl{\langle}}{\phi}{^*\bigl|~( A_1 \phi_1^*   \phi_1 )~  (A_2 \phi_2^* \phi_2) ~ \bigr|   }{\phi}
  \contraction[2ex]{  \bigl{\langle} \phi^*\bigl| ~  ( A_1   }{\phi}{_1^*}{ \phi }
 \bcontraction{ \bigl{\langle} \phi^*\bigl|~(}{A}{_1\phi_1^* \phi_1) ~(}{A}
 \bcontraction{ \bigl{\langle} \phi^*\bigl| ~  ( A_1 \phi_1^* \phi_1) ~  (A_2}{\phi}{_2^*}{\phi}
  \bigl{\langle} \phi^*\bigl| ~   (A_1 \phi_1^* \phi_1 )~  (A_2 \phi_2^* \phi_2 )~ \bigr| \phi  \bigr{\rangle } 
   \, .\label{wickcon}
 \ea

   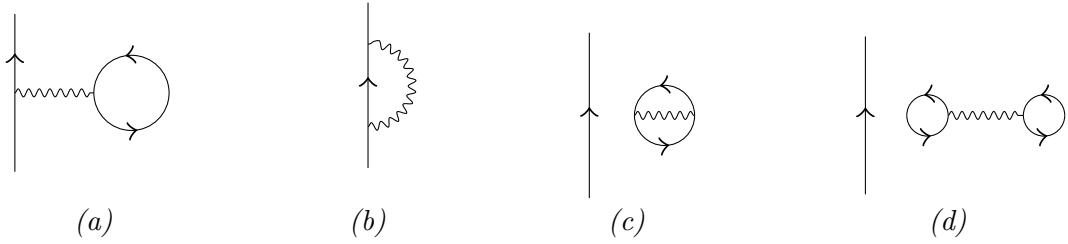
\begin{figure}
\centering
   \begin{center}
   \begin{tikzpicture}[node distance=1cm and 1cm]
\coordinate[] (v1);
\coordinate[below=1.05 cm of v1] (v2);
\coordinate[above=1.05 cm of v1] (v3);
\coordinate[right=1.05 cm of v1] (v4);
\coordinate[right=of v4] (v5);
\draw[electron] (v2) -- (v3);
\draw[photonloop] (v1) -- (v4);
\semiloop[fermion]{v4}{v5}{0};
\semiloop[fermion]{v5}{v4}{180};
\node[below=1.40 cm of v4] {\textit{(a)}};
\end{tikzpicture}
\hspace{2cm}   
   \begin{tikzpicture}[node distance=1cm and 1cm]
\coordinate[] (v1);
\coordinate[below=0.55 cm of v1] (v2);
\coordinate[above=0.55 cm of v1] (v3);
\coordinate[below=0.55 cm of v2] (v4);
\coordinate[above=0.55 cm of v3] (v5);
\semiloop[photonloop]{v3}{v2}{270};
\draw[electronmiddle] (v4) -- (v5);
\node[below= 0.40 cm of v4] {\textit{(b)}};
\end{tikzpicture}
\hspace{2cm}
 \begin{tikzpicture}[node distance=1cm and 1cm]
\coordinate[] (v1);
\coordinate[below=0.55 cm of v1] (v2);
\coordinate[above=0.55 cm of v1] (v3);
\coordinate[below=0.55 cm of v2] (v4);
\coordinate[above=0.55 cm of v3] (v5);
\coordinate[right=0.50 cm of v1] (v8);
\coordinate[right=0.60 cm of v1] (v6);
\coordinate[right= 0.80 cm of v6] (v7);
\semiloop[electronmiddle]{v6}{v7}{0};
\semiloop[electronmiddle]{v7}{v6}{180};
\draw[photonloop] (v6)--(v7);
\draw[electronmiddle] (v4) -- (v5);
\node[below=1.10 cm of v8] {\textit{(c)}};
\end{tikzpicture}
\hspace{2cm}
   \begin{tikzpicture}[node distance=1cm and 1cm]
\coordinate[] (v1);
\coordinate[below=1.05 cm of v1] (v2);
\coordinate[above=1.05 cm of v1] (v3);
\coordinate[right=0.55 cm of v1] (v4);
\coordinate[right= 0.55 cm of v4] (v5);
\coordinate[right= 0.55 cm of v4] (v8);
\coordinate[right=of v5] (v6);
\coordinate[right= 0.55 cm of v6] (v7);
\draw[electronmiddle] (v2) -- (v3);
\draw[photonloop] (v5) -- (v6);
\semiloop[fermion]{v4}{v5}{0};
\semiloop[fermion]{v5}{v4}{180};
\semiloop[fermion]{v6}{v7}{0};
\semiloop[fermion]{v7}{v6}{180};
\node[below=1.10 cm of v8] {\textit{(d)}};
\end{tikzpicture}
\caption{The four Feynman diagrams corresponding to the Wick contractions of Eq.(\ref{wickcon}).} \label{fig:M1}
\end{center}   
   \end{figure}
   
   \bigskip
   \bigskip

\noindent  The following two Wick  contractions are not distinct because they differ by a relabelling  of the two vertices:
 
 \be    \nonumber\contraction{ \bigl{\langle}}{\phi}{^* \bigl|~ (A_1 \phi_1^*}{\phi}
  \contraction[2ex]{  \bigl{\langle} \phi^* \bigl| ~  ( A_1   }{\phi}{_1^* \phi_1) ~ ( A_2 \phi_2^* \phi_2 )~ \bigr| }{\phi}
 \bcontraction{ \bigl{\langle} \phi^* \bigl|~(}{A}{_1 \phi_1^* \phi_1 )~(}{A}
 \bcontraction{ \bigl{\langle} \phi^* \bigl| ~  ( A_1 \phi_1^* \phi_1) ~  (A_2}{\phi}{_2^*}{\phi}
  \bigl{\langle} \phi^* \bigl| ~  ( A_1 \phi_1^* \phi_1) ~ ( A_2 \phi_2^* \phi_2) ~ \bigr| \phi  \bigr{\rangle } ~~~~~
  \text{and} ~~~~~
     \contraction{ \bigl{\langle}}{\phi}{^* \bigl|~ (A_1 \phi_1^*\phi_1 ) ~ ( A_2 \phi_2^*}{\phi}
  \contraction[2ex]{  \bigl{\langle} \phi^* \bigl| ~  ( A_1  \phi_1^* \phi_1) ~ ( A_2 }{\phi}{_2^* \phi_2) ~ \bigr| }{\phi}
 \bcontraction{ \bigl{\langle} \phi^* \bigl|~(}{A}{_1 \phi_1^* \phi_1 )~(}{A}
  \bcontraction[2ex]{ \bigl{\langle} \phi^*| ~  ( A_1}{ \phi}{_1^*}{\phi}
 \bigl{\langle} \phi^* \bigl| ~  ( A_1 \phi_1^* \phi_1) ~ ( A_2 \phi_2^* \phi_2) ~ \bigr| \phi  \bigr{\rangle } \, . ~~~~~
  \ee
\noindent Note that this procedure generates both connected and disconnected diagrams. We are only interested in connected diagrams which means that we consider pairings for which no subsets of fields are paired only amongst themselves.

\section{Correspondence between Feynman diagrams and $N$-rooted ribbon graphs}
\label{sect_bijection}

In this section we establish the bijective correspondence between  the $2N$-point electron correlation functions
 in scalar  QED and $N$-rooted ribbon graphs. As explained in the previous sections, Feynman diagrams in quantum field theory are generated by Wick's theorem, while rooted ribbon graphs can be realized as pairs of permutations on the set $H$  of half-edges of the ribbon graph. Our strategy will be to associate to each Wick contraction leading to a connected $2N$-point Feynman diagram with $\e$ photon lines  a pair of permutations $(\alpha, \sigma)$ on the set of $2 \e$ half-edges that establishes the correspondence with a rooted ribbon graph.

\vskip 0.2cm

\noindent Before we prove the exact bijection, let us try to understand intuitively why such a bijection can be anticipated. Both Feynman diagrams and rooted ribbon graphs can be generated as combinatorial objects out of sub-structures (vertices, edges, propagators, etc.) with set of rules giving the relation between the sub-structures. 

\vskip 0.2cm

\noindent Let us, for the sake of simplicity, focus on the case of the two point function and its correspondence to one-rooted ribbon graphs. The generalization to $2N$-point functions is straightforward. In the case of Feynman diagrams we have two kinds of lines, the photon and electron lines, and one kind of vertex where a photon line ends on an electron line. The number of vertices occurring in a Feynman diagram with no external photon lines is already determined and given by twice the number of internal photon lines in the diagram, leaving only two combinatorial objects to consider and a rule for how they combine. In addition, in the case of the $2$-point function, there is one special electron line which extends from $-\infty$ to $\infty$ along the vertical time axis, while all the other electron lines are closed loops. In the case of one-rooted ribbon graphs, we again have two combinatorial objects, vertices and edges, which combine in a given way. In addition, there is again one specific vertex which is special -- the root vertex. Thus, combinatorially, both structures are  determined by the exact same amount of combinatorial data and we have just the right amount of it to expect a correspondence.  Finally, previous results on enumeration of maps and Feynman diagrams show, see \cite{fdiag}, that the number of $2$-point functions arranged by the number of photon lines and the number of one-rooted ribbon graphs arranged by the number of edges are exactly the same. This implies one can make such a correspondence concrete by making the right identifications on each side.

\vskip 0.2cm

\noindent To understand which structures relate to which, we note that the simplest object we can consider for a $2$-point function is just an external electron propagator with no photon lines. On the other hand, in the case of one-rooted ribbon graphs the simplest  object is a single vertex and no edges. Thus, the correspondence should relate electron propagators to vertices in rooted ribbon graphs implying photon lines get mapped to edges in a rooted ribbon graph. In addition, the correspondence of the external line to  the root vertex in the ribbon graph generalizes for the case of  $2N$-point functions to the relation between the $N$ external electron lines and the $N$ root  vertices in the $N$-rooted ribbon graph. With this intuitive idea, we can hope to establish the exact bijection between the two sides.

\subsection{Bijection between Feynman diagrams and $N$-rooted ribbon graphs}
\label{sect_bijection_proof}

In this subsection we make precise the correspondence between Feynman diagrams of scalar QED and $N$-rooted ribbon graphs.

\vskip 0.2cm

\noindent  To get a handle on the number of Feynman diagrams in the theory, we use the fact that all Feynman diagrams are generated by Wick's theorem and use this to define a bijection between Feynman diagrams for $2N$-point functions and $N$-rooted graphs. As explained previously, the contractions from Wick's theorem generate both connected and disconnected Feynman diagrams. We consider only contractions leading to connected Feynman diagrams. 
 
\begin{theorem}
\label{thm_bijection}
There is a one-to-one correspondence between the set of connected Feynman diagrams of  scalar QED with $N$ external electron lines and  $\e$ internal photon lines on one side and the set of $N$-rooted maps with $\e$ edges on the other. 
\end{theorem}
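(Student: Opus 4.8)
The plan is to construct an explicit map $\Phi$ from connected Wick contractions to $N$-rooted ribbon graphs together with an explicit inverse $\Psi$, and then to check that the two respect the equivalence relations on each side. The guiding principle, anticipated in the intuitive discussion above, is that the \emph{photon half-edges become the half-edges of the map while the electron lines become its vertices}. Concretely, given a connected Feynman diagram with $N$ external electron lines and $\e$ internal photon lines, I would take $H$ to be the set of $2\e$ wavy half-edges $A_1,\dots,A_{2\e}$ carried by the $v=2\e$ trivalent vertices, labelled by the vertex indices. The involution $\alpha$ is the pairing of the $A$ fields dictated by the Wick contraction: each photon propagator joins two wavy half-edges, so $\alpha$ is automatically a fixed-point-free involution with exactly $\e$ transpositions, matching the $\e$ edges of the target map.

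The permutation $\sigma$ encodes the electron lines. Each vertex lies on a unique electron line, entered through its $\phi^*$ half-edge and left through its $\phi$ half-edge, so following the orientation of the electron lines partitions the $2\e$ wavy half-edges among the electron lines of the diagram. I would define $\sigma$ to send the wavy half-edge at a vertex to the wavy half-edge at the next vertex along the same electron line. For an internal electron loop, oriented counterclockwise, this closes into a genuine cycle of $\sigma$ and yields a non-root vertex of the ribbon graph, the counterclockwise orientation matching the counterclockwise cyclic ordering of half-edges at a map vertex. For an external electron line, oriented upward from $-\infty$ to $+\infty$, the vertices occur in a linear order; I would close this linear order into a cycle of $\sigma$ and declare the root to be the wavy half-edge at the first (bottom-most) vertex met along the line. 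Ordering the $N$ roots by the ordering of the external lines, inherited from the ordered fields in the bra and ket, gives $N$ ordered roots in $N$ distinct cycles of $\sigma$, exactly as Definition \ref{def_Nrooted} requires.

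The inverse map $\Psi$ reverses this recipe: starting from $(H,\alpha,\sigma)$ with $N$ ordered roots, create one trivalent vertex per element of $H$, join wavy half-edges according to $\alpha$ into photon propagators, and chain the vertices within each cycle of $\sigma$ into an electron line following the cyclic order. Non-root cycles close into counterclockwise loops, while each root cycle is cut open at its root and extended to $\pm\infty$ into an external line, the $N$ external lines being ordered by the roots. I would then verify that $\Psi\circ\Phi$ and $\Phi\circ\Psi$ are the identity and, crucially, that the two equivalence relations coincide: two Wick contractions differing only by a relabelling of the vertices are, under $\Phi$, precisely two maps differing by a relabelling of $H$, hence related by an isomorphism in the sense of Definition \ref{def_ribbon_iso}, since the vertex labels on the Feynman side are literally the half-edge labels on the map side.

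The step I expect to be the main obstacle is showing that the combinatorial data is complete and that the two equivalences match exactly. On one hand I must prove that $\langle \alpha,\sigma\rangle$ acts transitively on $H$ if and only if the Feynman diagram is connected; this should follow by translating an alternating electron–photon path in the diagram into a word in $\sigma$ and $\alpha$, but it needs care. On the other hand, and more delicately, I must check that the cyclic order of photon attachments along each electron line is exactly the invariant preserved by the allowed deformations of Feynman diagrams in Figures \ref{fig:equiv} and \ref{fig:diff} and nothing more: that sliding lines across one another without cutting them leaves this cyclic order intact, while the genuinely distinct diagrams of Figure \ref{fig:diff} differ precisely in this cyclic data. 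Establishing that the ribbon structure $\sigma$ captures neither too little nor too much of the topological information in the diagram is the heart of the argument.
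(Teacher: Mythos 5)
Your proposal follows essentially the same route as the paper's proof: you take $H$ to be the set of $2\e$ trivalent vertices (equivalently, their unique wavy half-edges, which are in canonical bijection with the vertices), let $\alpha$ be the photon pairing, let the cycles of $\sigma$ be the electron lines with the $N$ external lines closed into cycles and rooted, construct the inverse by the same recipe, and match the equivalence relations by noting that a relabelling $r$ of vertices conjugates $(\alpha,\sigma)$ to $(r\circ\alpha\circ r^{-1},\, r\circ\sigma\circ r^{-1})$ while fixing the bra and ket, hence the roots. The one point you flag as ``the heart of the argument''---verifying that the cyclic order of photon attachments is exactly the invariant of diagram deformations---is dissolved in the paper by \emph{defining} two Feynman diagrams (Wick contractions) to be equivalent precisely when they differ by a relabelling of vertices, so that step is definitional rather than substantive, and your remaining deviation (rooting at the bottom-most vertex rather than at the vertex paired with $\phi^*_k$ in the bra, and traversing lines with rather than against the orientation) is an immaterial change of convention.
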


\begin{example}

Under the bijection from Theorem \ref{thm_bijection} the Feynman diagram in Figure \ref{fig:perms} corresponds to the ribbon graph defined by the set of half-edges $H=\{1, 2, \dots, 12\}$ and the permutations $\alpha=(1\,2)(3\,4)(5\,6)(7\,8)(9\,10)(11\,12)$ and $\sigma=(\hat{1})(\hat{2}\,7\,3\,9) (\hat{11}\,10\,8\,12) (4\,6\,5)$ where the first three cycles of the permutation $\sigma$ are labelled with the numbers from one to three, respectively, and the hat symbol marks the root half-edges. 

\begin{figure}
\centering
   \begin{center}
   \begin{tikzpicture}[node distance=1cm and 1cm]
\coordinate[] (v1);
\coordinate[below=1.55 cm of v1, label=right:$\,_1$] (v2);

\coordinate[above=1.55 cm of v1] (v3);
\coordinate[right=1.05 cm of v1] (v4);

\coordinate[right= 1.60 cm of v4] (v5);
\path (v5)+(350:0.05) coordinate[ label=below:$\,_4$];

\coordinate[right= 0.80  cm of v4] (center);
\coordinate[above= 0.80  cm of center] (top);
\coordinate[below= 0.80  cm of center] (bottom);
\path (bottom)+(110:0.55) coordinate[ label=below:$\,_5$];
\path (top)+(190:0.2) coordinate[ label=below:$\,_6$];
\draw[photonloop] (top) -- (bottom);
\draw[electronmiddle] (v2) -- (v3);
\coordinate[below=1.20 cm of v1] (v6);
\coordinate[right=3.50 cm of v6] (v22);
\semiloopt[electronf]{v4}{v5}{0};
\draw[photonloop] (v6) -- (v22);
\coordinate[below=0.35 cm of v22, label=left:$\,_2$] (v23);
\coordinate[above=3.10 cm of v23] (v24) ; 
\coordinate[right=3.50 cm of v23] (v33);
\coordinate[above=3.10 cm of v33] (v34) ;
\coordinate[above=1.55 cm of v33] (v31);
\coordinate[above=0.75 cm of v31](v35);

\coordinate[below=0.75 cm of v31](v36);
\coordinate[above=1.05 cm of v31](v37);
\path (v37)+(70:0.5) coordinate[ label=below:$\,_{12}$];
\coordinate[below=1.05 cm of v31](v38);
\coordinate[above=1.55 cm of v23](v21);
\path (v21)+(180:0.15) coordinate[ label=below:$\,_{3}$];
\coordinate[right=1.55 cm of v21](v39);
\coordinate[left=0.10 cm of v39](v40);
\coordinate[below=0.10 cm of v40](v41);
\coordinate[right=0.45 cm of v39](v42);
\coordinate[above=0.10 cm of v42](v43);
\draw[electronmiddle](v33)--(v34);
\draw[electronfff] (v23) -- (v24);

\draw[photonloop](v21) -- (v5);
\coordinate[below=0.75 cm of v21](v25);
\path (v25)+(15:0.1) coordinate[ label=below:$\,_7$];
\coordinate[above=0.75 cm of v21](v26);
\path (v26)+(60:0.4) coordinate[ label=below:$\,_{9}$];
\draw[photonloop](v25)--(v41);
\draw[photonloop](v43)--(v35);
\path (v35)+(120:0.4) coordinate[ label=below:$\,_{8}$];
\draw[photonloop](v26)--(v36);
\path (v36)+(160:0.4) coordinate[ label=below:$\,_{10}$];
\semiloopt[electronf]{v5}{v4}{180};
\semiloop[photonloop]{v37}{v38}{270};
\path (v38)+(0:0.2) coordinate[ label=below:$\,_{11}$];
\end{tikzpicture}
\caption{Feynman diagram corresponding to the permutations $\alpha=(1\,2)(3\,4)(5\,6)(7\,8)(9\,10)(11\,12)$, $\sigma=(\hat{1})(\hat{2}\,7\,3\,9) (\hat{11}\,10\,8\,12) (4\,6\,5)$ on the set of half-edges $H=\{1, 2, \dots, 12\}$. A different choice of labelling of half edges and the resulting permutations define an equivalent ribbon graph.} \label{fig:perms}
\end{center}   
   \end{figure}
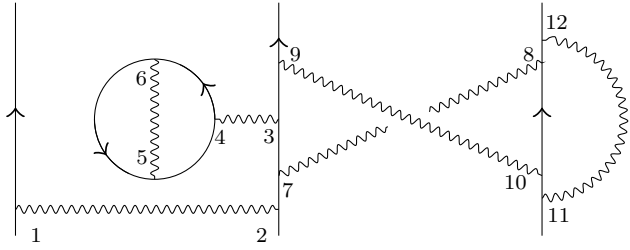

\end{example}

\begin{proof}
In general, a Wick contraction will contain $N$ strings of pairings linking each of  the fields  $\phi^*$ appearing in the bra to one and only one of the fields $\phi$ appearing in the ket. We label the pairs of fields thus linked by the same index:

\begin{equation*}
\contraction{\langle \phi^*_1 \ldots}{\phi}{^*_k \ldots \phi^*_N \mid (A_1\phi^*_1}{\phi}
\bcontraction{ \langle \phi^*_1 \ldots  \phi^*_k \ldots \phi^*_N \mid (A_1}{\phi}{^*_1\phi_1)(A_2\phi^*_2}{\phi)}
\contraction{\langle \phi^*_1 \ldots \phi^*_k \ldots  \phi^*_N \mid (A_1 \phi^*_1 \phi_1) (A_2}{\phi}{^*_2\phi_2) \ldots}{}
\contraction{\langle \phi^*_1 \ldots \phi^*_k \ldots  \phi^*_N \mid  (A_1 \phi^*_1 \phi_1) (A_2\phi^*_2\phi_2)\ldots \ldots (A_s }{\phi}{^*_s\phi_s)\ldots 
(A_{2\e}  \, \phi^*_{2\e}  \, \phi_{2\e}) \mid \phi_1 \ldots}{\phi}
\bcontraction{\langle \phi^*_1 \ldots \phi^*_k \ldots  \phi^*_N \mid  (A_1 \phi^*_1 \phi_1) (A_2\phi^*_2\phi_2)\ldots}{}{\ldots (A_s \phi^*_s}{\phi}
\langle \phi^*_1 \ldots \phi^*_k \ldots  \phi^*_N \mid  (A_1 \phi^*_1 \phi_1) (A_2\phi^*_2\phi_2)\ldots 
\ldots (A_s \phi^*_s\phi_s) \ldots 
(A_{2\e}  \, \phi^*_{2\e}  \, \phi_{2\e}) \mid \phi_1 \ldots \phi_k \ldots \phi_N \rangle 
\end{equation*}
where for later convenience we have labeled the $N$ external electron fields in the bra and ket with integers $1, \ldots, N$, and labeled the set of vertices in the diagram with integers $1, \ldots, 2\e$ appearing as subscripts of the corresponding fields. \\

\noindent Given such a Feynman diagram, we construct an $N$-rooted ribbon graph $(H, \alpha, \sigma)$ as follows. First, the set $H$ of half-edges is given by the set of vertices in the Wick contraction: define $H=\{1, \dots, 2\e\}$. 
Next we define the following two permutations based on the two kinds of pairings that appear in the corresponding Wick contraction.

\noindent For each photon propagator obtained by pairing the fields $A_{i}$ and $A_{j}$ between the $i^\textsuperscript{th}$ and $j^\textsuperscript{th}$ vertices with $i\neq j$, define an involution $\alpha_{ij} = (ij)$. 
Since there is just one $A$ field in each vertex, the pairing between the $A_{i}$ and $A_{j}$ fields determines the involution $\alpha_{ij}$ uniquely. In a Feynman diagram with $2\e$ vertices of the kind $(A \phi^* \phi)$, there will be $\e$ such pairings and hence $\e$ transpositions $\alpha_{ij}$. These transpositions form a permutation $\alpha\in S_{2\e}$ on the set $H$ of half-edges. It is a fixed point free involution by construction. \\

\noindent Now we follow the electron pairings. There are two kinds of electron pairings -- pairings involving the electrons in the bra and ket leading to external electron lines in the Feynman diagram and pairings that lead to internal electron loops in the Feynman diagram. Let us consider each separately. 
\begin{itemize}
\item In the former case, the line starts with the pairing of the field $\langle \ldots \phi^*_{k} \ldots  \mid$ with a field, say $\phi_p$, in the $p^\textsuperscript{th}$ vertex.  We then follow the sequence formed by the pairing of $\phi^*_p$ with the next field, say $\phi_q$, and so on until the sequence leads us to a pairing of some $\phi_s^*$ to the electron $\mid  \ldots \phi_{k} \ldots \rangle$ in the ket, i.e.
\begin{equation*}
\contraction{\langle\phi^*_1 \ldots }{\phi}{^*_k  \ldots \phi^*_N \mid \ldots (A_p\phi^*_p}{\phi}
\bcontraction{\langle \phi^*_1 \ldots \phi^*_k \ldots \phi^*_N \mid  \ldots  (A_p }{\phi}{^*_p \phi_p) \ldots  (A_q\phi^*_q}{\phi}
\contraction{\langle \phi^*_1 \ldots \phi^*_k \ldots \phi^*_N \mid  \ldots  (A_p \phi^*_p \phi_p) \ldots  (A_q\phi^*_q\phi_q)\ldots \ldots (A_s }{\phi}{^*_s\phi_s) \ldots \mid \phi_1 \ldots }{\phi}
\contraction{\langle \phi^*_1 \ldots \phi^*_k \ldots \phi^*_N \mid  \ldots  (A_p \phi^*_p \phi_p) \ldots  (A_q}{\phi}{^*_q\phi_q)\ldots}{}
\bcontraction{\langle \phi^*_1 \ldots \phi^*_k \ldots \phi^*_N \mid  \ldots  (A_p \phi^*_p \phi_p) \ldots  (A_q\phi^*_q\phi_q)\ldots}{}{\ldots (A_s \phi^*_s}{\phi}
\langle \phi^*_1 \ldots \phi^*_k \ldots \phi^*_N \mid  \ldots  (A_p \phi^*_p \phi_p) \ldots  (A_q\phi^*_q\phi_q)\ldots \ldots (A_s \phi^*_s\phi_s) \ldots \mid \phi_1 \ldots \phi_k \ldots \phi_N \rangle .
\end{equation*}
To the external electron line formed by this sequence we associate the cycle $(p,q,\ldots, s)$ and mark the half-edge $p \in H$ corresponding to the first vertex  $(A_p \phi^*_p \phi_p)$ linked to $\phi^*_k$. The half-edge $p$ becomes thus the $k^\textsuperscript{th}$ root and we denote it by  $\hat h_k$. 
There is only one $\phi^*$ and $\phi$ field in each vertex $(A\phi^*\phi)$ and hence the cycle permutation associated to each sequence of pairings is necessarily unique. Since there are $N$ electrons in the bra or ket, there are $N$ such cycles with $N$ labeled half-edges, $\hat h_1, \ldots, \hat h_N$.
\item In the latter case,  the sequence of pairings both starts and ends at the same vertex leading to an internal electron loop, e.g.
\begin{equation*}
\bcontraction{\langle  \phi^*_1 \ldots \phi^*_k \ldots \phi^*_N\mid \ldots (A_k}{\phi}{^*_k \phi_k) \ldots (A_l\phi^*_l}{\phi)}
\contraction[2ex]{\langle  \phi^*_1 \ldots \phi^*_k \ldots \phi^*_N\mid \ldots (A_k \phi^*_k}{\phi}{_k) \ldots (A_l\phi^*_l\phi_l)\ldots \ldots (A_n}{\phi}
\bcontraction{\langle  \phi^*_1 \ldots \phi^*_k \ldots \phi^*_N \mid \ldots (A_k \phi^*_k \phi_k) \ldots (A_l\phi^*_l\phi_l)\ldots}{}{\ldots (A_n \phi^*_n}{\phi}
\contraction{\langle  \phi^*_1 \ldots \phi^*_k \ldots \phi^*_N  \mid \ldots (A_k \phi^*_k \phi_k) \ldots (A_l}{\phi}{^*_l\phi_l)\ldots}{}
\langle  \phi^*_1 \ldots \phi^*_k \ldots \phi^*_N  \mid \ldots (A_k \phi^*_k \phi_k) \ldots (A_l\phi^*_l\phi_l)\ldots \ldots (A_n \phi^*_n\phi_n) \ldots \mid  \phi_1 \ldots \phi_k \ldots \phi_N \rangle .
\end{equation*}
To this internal electron loop we associate the cycle $(k,l,\ldots,n)$. The same argument as before implies the cycle associated to the sequence of pairings is also unique. 
\end{itemize}

\noindent Since none of the fields must be left unpaired, the set of all such cycles gives a permutation $\sigma$ on the set of $2e$ vertices, that is on the set $H$. 

\vskip 0.2cm

\noindent Since we only consider connected Feynman diagrams, the obtained permutations $\alpha$ and $\sigma$ generate a subgroup of $S_{2\e}$ that acts transitively on $H$. This completes the construction of a ribbon graph $(H, \alpha, \sigma)$ with $N$ roots $\hat h_1, \ldots, \hat h_N$ corresponding to a given Wick contraction. \\

\vskip 0.2cm

\noindent Conversely, let $(H, \alpha, \sigma)$ be an $N$-rooted ribbon graph with $\e$ edges and roots $\hat h_1, \ldots, \hat h_N$. The set $H=\{1,\dots, 2\e\}$ of half edges determines the set of vertices in a Wick contraction. The number $N$ of roots determines the number of fields $\phi^*$ and $\phi$ in the bra and ket. The permutation $\alpha\in S_{2\e}$ gives the pairings of the $A$-fields. The permutation $\sigma\in S_{2\e}$ has $N$ special cycles each of which contains a root $\hat h_i$ for $i=1,\dots, N$. Let the cycle containing  the root $\hat h_k$ have the form $(\hat h_k, p, q, \dots, s)$. This cycle determines a string of pairings that connects $\phi^*_k$ to $\phi_p$, followed by pairing $\phi^*_p$ to $\phi_q$, and so on up to the pairing of $\phi^*_s$ to $\phi_k$ in the ket.  The remaining cycles of $\sigma$ determine the cycle pairings of the fields $\phi^*$ and $\phi$ in the obvious way.  

\vskip 0.2cm

\noindent To see that equivalent Wick contractions  arise from equivalent rooted ribbon graphs, recall that two Wick contractions are identified if and only if they can be obtained from one another by relabelling of vertices.  Denote such a relabelling by  $r\in S_{2\e}$. By our construction, vertices in a Wick contraction form the set $H$ of half-edges in the corresponding ribbon graph. Therefore a relabelling of vertices in a contraction results in a relabelling of the half-edges,  $r:H\to H$,  and the new permutations $(\alpha', \sigma')$ are related to the original permutations $(\alpha, \sigma)$ as follows: 
\begin{equation*}
\alpha' = r\circ \alpha \circ r^{-1}, \qquad \sigma' = r\circ \sigma \circ r^{-1}\,.
\end{equation*}
\noindent We want to show that the two ribbon graphs $(H, \alpha, \sigma)$ and $(H, \alpha', \sigma')$ are isomorphic. 
The relabelling permutation $r$ is the bijection $\psi:H\to H$ from Definition \ref{def_ribbon_iso} and thus the required commutation relations  are satisfied. 

\vskip 0.2cm

\noindent The roots are mapped to the roots by the relabelling $r$ as the $k^\textsuperscript{th}$ root half-edge corresponds to the vertex of the Wick contraction which is paired to $\phi^*_k$ in the bra-part. Since the relabelling does not affect the fields in the bra and ket, we get that $r$ sends $k^\textsuperscript{th}$ root of the graph $(H, \alpha, \sigma)$ to the $k^\textsuperscript{th}$ root of the graph $(H, \alpha', \sigma')$.

\end{proof}

\begin{remark} Let $V$ denote the set of cycles of the permutation $\sigma$. 
Under the bijection of Theorem \ref{thm_bijection} between $2N$-point Feynman diagrams in scalar QED and $N$-rooted graphs, the set of $N$ external electron lines, the set of $(| V|-N)$ internal electron loops, the $2\e$ vertices, and the $e$ photon lines  in the Feynman diagram correspond bijectively to the $N$ root vertices, the $(|V|-N)$ non-root vertices, the $2\e$ half-edges and the $\e$ edges of the $N$-rooted  graph, respectively. Note that in physics, it is important to arrange the Feynman diagrams by the number of loops, where the loops include the internal electron and photon loops. As is easy to see, the number of such loops is $e-N+1$. Thus the number of Feynman diagrams with $l$ loops is equal to the number of $N$-rooted graphs with $l+N-1$ edges. 

\end{remark}

  \section{Fundamentals of quantum field theory}
  \label{sect_QFT}
  
\noindent   From the point of view of quantum field theory, Feynman diagrams can be generated through path integrals with respect to the  fields of the theory, which are functions or sections of fiber bundles over spacetime, in general  a d-dimensional manifold. A quantum field theory that generates the Feynman diagrams of interest to this work is the theory of a complex scalar field $\phi$ coupled to a real scalar field $A$, for which one defines the partition function
  \be Z \left(J,\eta,\eta^*, \lambda, \hbar \right)
  = \int {\cal{D}}\phi {\cal{D}}\phi^* {\cal{D}}A~~ \exp \bigl( \frac{{\cal{S}}}{\hbar} \bigr)  \, ,\label{part}
  \ee
  where  $\hbar$ is Planck's constant, $\lambda$ is the coupling constant of the theory to be introduced below and the notation ${\cal{D}} \phi$ denotes the measure used to define a path integral with respect to the field $\phi$. For the purpose of this
  paper, there will be no need to specify this definition  as we will be considering a simplified situation in which the path integration will reduce to an ordinary integration.
  
\noindent   The action ${\cal{S}}$ may be separated into three contributions,   \be \nonumber{\cal{S}} = {\cal{S}}_F + {\cal{S}}_I + {\cal{S}}_S \, . \ee
  The first term represents the free field action (containing the kinetic and rest mass energy contributions but no interactions),
  \be \nonumber{\cal{S}}_F =  - \int d^dx ~\Bigl(  m_\phi \, \phi(x) \phi^*(x)  +\frac{m_A}{2} A^2(x)  +  \partial_\mu \phi(x) \partial^\mu \phi^* (x) + 
  \frac{1}{2} \partial^\mu A(x) \partial^\mu A(x)  \Bigr) .
  \ee In the following we will set the two masses equal, $m_A = m_\phi$, and will choose units in which the masses are equal to 1. 
  The integral is over the d dimensions of the spacetime, with coordinates $x_1, \ldots, x_d$. It is not necessary to provide any more details about this integral because for the purpose of the present work, this integration will be absent for reasons discussed in the next section.
  The interaction action relevant to our calculation is 
  \be \nonumber
   {\cal{S}}_I = \lambda \, \int d^d x~  \phi^*(x) \phi(x) A (x) \, ,
  \ee where $\lambda$ is the coupling constant of the theory and is used to construct the perturbative expansion of the integrals. 
   The last term contains  the so-called source terms 
   \be\nonumber
   {\cal{S}}_S = \int d^dx ~ \Bigl(J(x) A(x) +  \eta^*(x) \phi(x) \, ,
  + \eta(x) \phi^*(x)  \Bigr),
  \ee where the sources $J(x)$,  $\eta(x)$ and $\eta^*(x)$ are  formal parameters used to  compute the correlation functions of the theory as described below.
  
  \vskip 0.2cm
  
  \noindent  In the following we use the convention of particle physicists who choose units with $\hbar=1$. 
  
  \vskip 0.2cm
  
\noindent   The correlation functions of the theory are the quantities of interest in physics.   An example would be  the four-point  correlation function $\langle \phi \phi^* A A\rangle $, where we have suppressed the spacetime dependence of all the fields for ease of notation. This correlation function is  given by
  \be
\langle \phi \phi^* A^2 \rangle := \frac{1}{Z(0)}   \int {\cal{D}}\phi {\cal{D}}\phi^* {\cal{D}}A~~\Bigl( \phi \phi^*A^2  ~ e^{{\cal{S}}_F + {\cal{S}}_I}
\Bigr) \,  , \label{nos}
  \ee where
  \be  Z(0) :=  Z \left(J=\eta=\eta^*=\lambda =0,\hbar \right) \, . \label{zzz} \ee  Note the absence of the sources in Eq.(\ref{nos}).
  The usefulness of the sources comes from the fact that   this   correlation function can be written as
  \be \langle \phi \phi^* A^2 \rangle  = \frac{1}{Z(0)}   \frac{\delta^4 Z \left(J,\eta,\eta^*,\hbar\right) }{ \delta J^2 \,  \delta \eta \, \delta \eta^*}  \Biggr|_{J=\eta=\eta^*=0}   \, ,\label{phiphi}
  \ee
  where the derivatives with respect to the sources  are actually functional derivatives (we have suppressed the spacetime dependence of the sources as well).
  
  \vskip 0.2cm

\noindent   Any correlation function can be obtained by taking such derivatives of the partition function (\ref{part})  and then setting all the sources equal to zero.   The correlation functions contain all the dynamics of the theory. Therefore, if one can explicitly calculate the partition function (\ref{part})  one has solved the theory in the sense that all correlation functions can then easily be  obtained.

  \section{Counting Feynman diagrams using quantum field theory}
  \label{sect_count}
  
\noindent   In practice, it is impossible to evaluate  the partition function (\ref{part})  except for extremely simple models. 
Fortunately, we are not interested here in solving the quantum theory, only in counting  Feynman diagrams. Then one can simplify greatly
the problem by considering our quantum field theory in zero spacetime dimension. What this means in practice is that our fields are now taken to be spacetime independent and the action does not contain an integral over spacetime anymore. From the point of view of the path integral, the fields become now ordinary real or complex variables and the  path integrals with respect to the fields reduce to ordinary integrals over $\mathbb R^3$.  
More precisely,  $A$ and $J$ are now real variables, and $\phi, \phi^*\in \mathbb C$ are complex conjugated to each other, which implies the same relationship for the sources $\eta, \eta^*\in\mathbb C$. Assuming $\phi=x+iy$ with $x,y\in \mathbb R$ we then regard $d\phi d\phi^*$  as  $dx dy$. 
 We therefore consider from now on the following integral of a real function

  \be Z \left(J,\eta,\eta^* , \lambda\right) 
  = \int_{\mathbb R^3} d\phi d\phi^* dA~~ \exp  \Biggl( ~- \phi \phi^* - \frac{1}{2} A^2 + \lambda \phi \phi^* A + J A +  \eta \phi^*
  + \eta^* \phi \Biggr) \, . \label{part2}
  \ee 
This is still nontrivial to evaluate, the culprit being the interaction term $\lambda \phi \phi^* A$.  A standard technique in QFT is to Taylor expand the exponential of  the interaction term  and  treat the result as a formal series which can be integrated  term by term. This results in an infinite number of integrals, forming a series in powers of the coupling constant $\lambda$,  each of which may be evaluated exactly.  The question of the convergence of the resulting series is a subtle issue in general but  is not relevant here as $\lambda$ will be used  as formal parameter that will allow us to count separately certain classes of diagrams.

\vskip 0.2cm

\noindent Expanding the exponential of the interaction term, we get
\be \nonumber
Z \left(J,\eta,\eta^*, \lambda \right)
  = \int_{\mathbb R^3} d\phi d\phi^* dA~~ \sum_{k=0} \frac{(\lambda \phi \phi^* A)^k}{k!} ~ \exp  \Biggl(~- \phi \phi^* - \frac{1}{2} A^2  + J A +  \eta \phi^*
  + \eta^* \phi \Biggr),
  \ee which, using the trick of obtaining factors in front of the exponential by taking derivatives with respect to the sources as  in Eq.(\ref{phiphi}), may be written as

\be \nonumber
Z \left(J,\eta_1,\eta_2, \lambda \right)  = \int_{\mathbb R^3} d\phi d\phi^* dA~~ \sum_{k=0} \frac{\lambda^k}{k!}   \left(\frac{d^3}{d \eta d \eta^*  d J}\right)^k ~ \exp  \Biggl(~- \phi \phi^* - \frac{1}{2} A^2  + J A +  \eta \phi^*
  + \eta^* \phi \Biggr)  \, .
\ee The derivatives with respect to the sources may be pulled out of the integrals which can  then be easily performed to give us

\ba \label{zees} Z \left(J,\eta,\eta^*, \lambda \right)   &=& \sum_{k=0} \frac{\lambda^k}{k!}  \left(\frac{d^3}{d\eta d\eta^* dJ}\right)^k~~  \int_{\mathbb R^3} d\phi d\phi^* dA~  ~ \exp  \Biggl( ~- \phi \phi^* - \frac{1}{2} A^2  + J A +  \eta \phi^*
  + \eta^* \phi \Biggr)  \nonumber \\
  &=& \pi  \sqrt{ 2 \pi} \, \sum_{k=0} \frac{\lambda^k}{k!}   \left(\frac{d^3}{d \eta d\eta^* dJ}\right)^k ~ \exp \left( \eta \eta^* + \frac{J^2}{2} \right)  \, .  \label{partfunc}
\ea
Note that $Z(0)$ as defined in Eq.(\ref{zzz}) is equal to
\be Z(0) =  \pi \sqrt{2 \pi} \,. \nonumber \ee

\noindent We may  now  obtain generating functions for the Feynman diagrams of relevance to the present work.  As follows from quantum field theory,  the generating function for the diagrams with  $p$ external photon lines and $N$ external electron lines is  given by 

\ba  Z_{N,p} \left( \lambda \right) &:=&  \langle (\phi \phi^*)^N \, A^p \rangle  \nonumber \\ 
&=&  \frac{1}{Z(0)} ~ \left(\frac{d}{d \eta d \eta^*} \right)^N \left( \frac{d}{dJ} \right)^p  \, Z \left(J,\eta,\eta^*, \lambda \right) \Biggr|_{J= \eta=\eta^*= 0 } \nonumber \\
&=& \left(\frac{d}{d \eta d \eta^*} \right)^N \left( \frac{d}{dJ} \right)^p 
 \sum_{k=0} \frac{\lambda^k}{k!}   \left(\frac{d^3}{d \eta d \eta^* d J}\right)^k ~ \exp \left( \eta \eta^* + \frac{J^2}{2} \right) 
\Biggr|_{J= \eta=\eta^*= 0 }  \, . \nonumber 
\ea
This is a generating function in the following sense: in the expansion of $Z_{N,p}(\lambda)$, the coefficient of  $\lambda^{k}$ gives the number of Feynman diagrams with $ p$ external photon lines, $N$ external electron lines, and $k$ vertices.  Figure (\ref{fig:perms}) shows one of the Feynman diagram corresponding to $p=0,N=3,k=12$. Although we will not work with Feynman diagrams with external photon lines, for illustrative purposes examples contributing to  $p=1, N=1, k=5$ are shown in Figure (\ref{fig:npk}).  Note that both connected and disconnected diagrams are counted by these generating functions.

  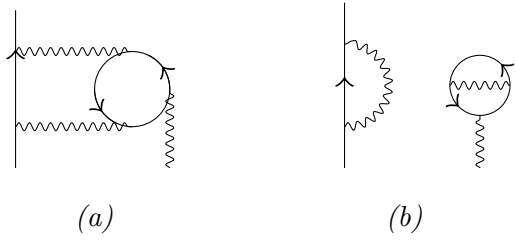
\begin{figure}
\centering
   \begin{center}
   \begin{tikzpicture}[node distance=1cm and 1cm]
\coordinate[] (v1);
\coordinate[below=1.05 cm of v1] (v2);
\coordinate[right= 2.05 cm of v2](v10);
\coordinate[above=1.05 cm of v1] (v3);
\coordinate[right=1.05 cm of v1] (v4);
\coordinate[right=of v4] (v5);
\coordinate[below= 0.50 cm of v1] (v6);
\coordinate[above= 0.50 cm of v1] (v7);
\coordinate[right=0.50 cm of v4] (center);
\draw[photonloop] (v10) -- (v5);
\coordinate[above= 0.5 cm of center] (v8);
\coordinate[below= 0.5 cm of center] (v9);
\draw[electron] (v2) -- (v3);
\draw[photonloop] (v7) -- (v8);
\draw[photonloop] (v6) -- (v9);
\semiloopt[electronf]{v4}{v5}{0};
\semiloopt[electronf]{v5}{v4}{180};
\node[below=1.40 cm of v4] {\textit{(a)}};
\end{tikzpicture}
\hspace{2cm}   
  \begin{tikzpicture}[node distance=1cm and 1cm]
\coordinate[] (v1);
\coordinate[below=0.55 cm of v1] (v2);
\coordinate[above=0.55 cm of v1] (v3);
\coordinate[below=0.55 cm of v2] (v4);
\coordinate[above=0.55 cm of v3] (v5);
\semiloop[photonloop]{v3}{v2}{270};
\draw[electronmiddle] (v4) -- (v5);
\coordinate[right= 0.80 cm of v1] (v21);
\coordinate[below=0.55 cm of v21] (v22);
\coordinate[above=0.55 cm of v21] (v23);
\coordinate[below=0.55 cm of v22] (v24);
\coordinate[above=0.55 cm of v23] (v25);
\coordinate[right=0.50 cm of v21] (v28);
\coordinate[right=0.60 cm of v21] (v26);
\coordinate[right=0.40 cm of v26] (center2);
\coordinate[below=0.40 cm of center2] (v29);
\coordinate[below=0.70 cm of v29] (v30);
\coordinate[right= 0.80 cm of v26] (v27);
\coordinate[left= 0.20 cm of v22] (v28);
\semiloop[electronff]{v26}{v27}{0};
\semiloop[electronff]{v27}{v26}{180};
\draw[photonloop] (v26)--(v27);
\draw[photonloop] (v30) -- (v29);
\node[below=0.90 cm of v22] {\textit{(b)}};
\end{tikzpicture}
\caption{Two Feynman diagrams corresponding to one external electron, $N=1$, one external photon, $p=1$ and five vertices, $k=5$. Note that diagram (b) is disconnected.} \label{fig:npk}
\end{center}   
   \end{figure}

\noindent It is possible to evaluate explicitly the derivatives  using 
\ba 
 \nonumber
&&\frac{d^n}{dJ^n} e^{\frac{J^2}{2}} \biggr|_{J=0} = \left\{\begin{array}{cc}(n-1)!! & \text{ if n is even} \\0 & \text{ if n is odd}\end{array}\right. ,  \\ \label{Kronecker}  \\ \nonumber
&&\left(\frac{d}{d \eta }\right)^m \left( \frac{d}{ d \eta^*} \right)^n ~ e^{\eta \eta^* } \biggr|_{\eta=\eta^* = 0 }  =  n!  ~\delta_{n,m} \, ,
\ea
where the convention $(-1)!!=1$ is assumed. 
This leads to 
\ba  \nonumber Z_{N,p} \left( \lambda \right)
&=& \left\{\begin{array}{cc}\sum_{k=0}^\infty ~ \frac{(2k+N)! \, (2k+p-1)!!}{(2k)!} ~ \lambda^{2k} & \text{ if } p \text{ is even}, \\ 
\\
\sum_{k=0}^\infty ~ \frac{(2k+N+1)! \, (2k+p)!!}{(2k+1)!} ~ \lambda^{2k+1}  & \text{ if } p \text{ is odd.}\end{array}\right.
\ea

\noindent For example the coefficient of $\lambda^5$ in $Z_{1,1}$ is $90$ and the Feynman diagrams of Figure (\ref{fig:npk}) represent two of these ninety diagrams. 

\vskip 0.2cm

\noindent As we saw in Section 4,  the photon lines in the Feynman diagrams map to edges of the rooted maps. We are therefore interested only in Feynman diagrams with no external photon lines and $N$ external electron lines, in which case the number of vertices is equal to twice the number of photon lines, $k=2 \e$, and   the formula simplifies to 

\ba  {\cal{Z}}_N (\lambda) &:= &  Z_{N,0} \left( \lambda \right) \nonumber \\ 
&=&  \frac{1}{Z(0)} ~ \left(\frac{d}{d \eta} \frac{d}{ d \eta^*} \right)^N  \, Z \left(J,\eta,\eta^*, \lambda \right) \Biggr|_{J= \eta=\eta^*= 0 } \label{zndef} \\ 
\label{Zc0}
&=&  \sum_{k=0}^\infty ~ \frac{(2k+N)! \, (2k-1)!!}{(2k)!} ~ \lambda^{2k}~.~~~~ 
  \ea

\noindent As noted before, these generating functions produce both connected and disconnected diagrams. Only connected diagrams are relevant, both for physics and for the enumeration of graphs. These can be obtained using the following standard trick of taking the natural logarithm of  Eq.(\ref{part2}) before taking the derivatives with respect to the sources,  
 see for examples \cite{zvonkin} or \cite{harary}.  We will use the notation $W_{N,p}$ for  the generating functions of the connected diagrams with $N$ external electron lines and $p$ external photon lines. They   are given by   
\be \nonumber
W_{N,p} \left(\lambda \right) = \frac{1}{p! N!} \left(\frac{d}{dJ} \right)^p \left(\frac{d}{d \eta} \frac{d}{d \eta^*} \right)^N \,   \ln \left( \frac{ Z\left(J,\eta,\eta^*,\lambda\right)}{Z(0) } \right)  \biggr|_{J=\eta=\eta^*=0}
 \, .
\ee

\noindent As mentioned previously, we focus on the diagrams with no external photon lines, corresponding to $W_{N,0}(\lambda)$. These are  the generating functions of the connected Feynman diagrams with  $N$ external electron lines and from Theorem  \ref{thm_bijection} , they are equal to $M_N (\lambda)$ of Eq.(\ref{mnz}). We may therefore write

\ba M_N (\lambda)  &=&  W_{N,0} \left( \lambda \right) \nonumber \\
&=&  \frac{1}{N!}
 \left(\frac{d}{d \eta} \frac{d}{d \eta^*} \right)^N \, \,   \ln \left( \frac{ Z\left(J,\eta,\eta^*,\lambda\right)}{Z(0) } \right)  \biggr|_{J=\eta=\eta^*=0}
 \label{wndef} \, . \ea

\noindent In the following we will not indicate explicitly the $\lambda$ dependence of the ${\cal{Z}}_{N}$ and $M_N$ to ease the notation.

\section{The generating function of one-rooted maps}
\label{sect_onerooted_count}

\noindent The first quantity of interest for us is  the generating function of the one-rooted maps $M_1$. This is given by 
\ba
 \label{series}
M_1&=&  \frac{d^2}{d \eta^* d \eta} \,  \ln Z \left(J,\eta,\eta^*\right) \biggr|_{J=\eta=\eta^*=0} \nonumber \\
&=& \frac{{\cal{Z}}_1}{{\cal{Z}}_0} \, \nonumber \\
 &=&  1 + 2  \, \lambda^2 + 10 \,  \lambda^4 + 74 \, \lambda^6 + 706 \, \lambda^8 + 8\,162 \,  \lambda^{10}+ 110 \, 410 \, \lambda^{12} +  \ldots 
\ea
where we have used Eqs.(\ref{Zc0}) and \eqref{Kronecker}.

\subsection{The number of one-rooted maps with $\e$ edges}
\label{sect_onerooted}

\noindent Even though one can Taylor expand the ratio of two sums ${{\cal{Z}}_1}$ and ${{\cal{Z}}_0}$ as above to obtain the power series \eqref{series} and then to read off the numbers $m_1(\e)$ of one-rooted maps with  $\e$ of edges from the coefficients of the series, 
this does not give the closed form expression for these coefficients.
In order  to recover Eq.\eqref{M1e} of Arqu\`es-B\'eraud for $m_1(\e)$,  we need to express our result for $ M_1$ as a single sum over powers of $\lambda$ instead of a ratio of two sums ${{\cal{Z}}_1}$ and ${{\cal{Z}}_0}$ . To achieve this,  let us first express ${{\cal{Z}}_1}$ in terms 
 of ${\cal{Z}}_0$ and  its derivative  with respect to the coupling constant. This is done by noting that 
\ba {\cal{Z}}_1  &=& \sum_{n=0}   (2n+1) !! \lambda^{2n} \nonumber \\
&=& \left( \lambda \frac{d}{d\lambda} + 1 \right) \sum_{n=0} (2n-1)!! \lambda^{2n} \nonumber  \\ \nonumber
&=& \lambda  \,  {\cal{Z}}_0'+ {\cal{Z}}_0 \,   \label{zunzero}
\ea 
where a prime indicates a derivative with respect to the coupling constant $\lambda$. This allows us to write
\ba M_1 =  \frac{{\cal{Z}}_1}{{\cal{Z}}_0} \, 
=  \lambda \frac{{\cal{Z}}_0' }{{\cal{Z}}_0} + 1  \, . \label{zop} \ea

\noindent To recover Eq.(\ref{M1e}),  it proves convenient to rewrite this in terms of 
\be \label{mzero} M_0=\ln {\cal{Z}}_0   \, .\ee
Doing so is also useful  for the next section where we will obtain equations relating the various $M_N$ directly.
Using the previous two equations,  we write 
\ba  M_1
&=& \lambda  \,  M_0' + 1 \, . \label{w10}\ea

\noindent The connection with Eq.(\ref{M1e}) is made by using the  identity
\ba \ln  \biggl( 1+ \sum_{n=1}^\infty A_n \lambda^{2n}  \biggr)&=& \sum_{i=1}^\infty  ~\lambda^{2i} ~\sum_{k=1}^i  \frac{(-1)^{k+1}}{k} \, 
  \sum_{\substack{\mu_1 + \ldots + \mu_k=i \\ \mu_i \neq 0} } ~\prod_{j=1}^{k}  \, ~A_j~ ,
\ea giving us
\be  M_0=  \Biggl(
 \sum_{\e=1}^\infty  ~\lambda^{2\e} ~\sum_{k=1}^\e  \frac{(-1)^{k+1}}{k} \, 
  \sum_{\substack{\mu_1 + \ldots + \mu_k=e \\ \mu_i \neq 0  } } ~\prod_{j=1}^{k}  \, ~(2 \mu_j-1)!! \Biggr)  - \ln Z(0) \, .  \label{logid}
\ee

\noindent Using Eqs.(\ref{w10}) and (\ref{logid}), we get 
\ba M_1 &=& 
1 +  \sum_{\e=1}^\infty  \, (2\e) ~\lambda^{2\e} ~\sum_{k=1}^\e  \frac{ (-1)^{k+1}}{k} \, 
  \sum_{\substack{\mu_1 + \ldots + \mu_k=\e \\ \mu_i \neq 0 } } ~\prod_{j=1}^{k}  \, ~(2 \mu_j-1)!!
\nonumber \\ \nonumber
&=& 
 \sum_{\e=0}^\infty  ~\lambda^{2\e} ~\sum_{k=0}^\e   (-1)^{k} \, 
  \sum_{\substack{\mu_1 + \ldots + \mu_{k+1}=\e+1 \\ \mu_i \neq 0} } ~\prod_{j=1}^{k+1}  \, ~(2 \mu_j-1)!! \label{w10fin} \, .
\ea

\noindent From this we can read off the number of rooted maps with $\e$ edges  since by  Eq.\eqref{mnz} we have
$ \nonumber M_1 = \sum_{\e=0}^\infty m_1(\e) \lambda^{2\e} \, . $
This gives us
 \be m_1(\e) =  \sum_{k=0}^\e (-1)^k 
 \sum_{\substack{\mu_1 + \ldots + \mu_{k+1}=\e+1 \\ \mu_i \neq 0 } }  ~\prod_{j=1}^{k+1} (2 \mu_j-1)!!   \, .\label{tata}
 \ee
 We have therefore recovered the known expression for $m_1(\e)$ of Arqu\`es and B\'eraud, Eq.(\ref{M1e}),  using quantum field theory.

\subsection{Differential equation for $M_1(\lambda)$}

As mentioned in Section \ref{sect_M1}, a differential equation for the generating function of one-rooted maps $M_1$  is known. In this section we will recover it from quantum field theory.

\vskip 0.2cm

\noindent For the theory we are considering, one can derive a differential equation for $M_0$ \cite{prd}. The details are presented in the appendix and the result is 
\be
M_0' = 2 \lambda + 4 \lambda^2 M_0' + \lambda^3 M_0'' + \lambda^3 (M_0')^2 \, , \label{wprime}\ee
where a prime indicates a derivative with respect to $\lambda$.

\vskip 0.2cm

\noindent We can now use this to generate a differential equation for $ M_1 $.
Isolating $M_0'$ in terms of $M_1$  in Eq.(\ref{w10}) and plugging into Eq.(\ref{wprime}), we get 
\be
M_1 = 1  + \lambda^2 M_1 + \lambda^3  M_1'  + \lambda^2 M_1^2 , \label{diffeqm1}
\ee
which is the known equation,  Eq.(\ref{M1DifferentialEquation}).

\section{Generating functions of $N$-rooted maps}
\label{sect_Nrooted_count}

Recall  that $M_N(\lambda)$ is the generating function of $N$-rooted maps defined in \eqref{mnz} by 
\be 
M_N (\lambda)  = \sum_{\e=0}^\infty m_N(\e)  \lambda^{2\e}\nonumber  \, ,\ee
with $m_N(\e)$ being the number of $N$-rooted maps with $\e$ edges regardless of genus. 
 Theorem \ref{thm_thm2} in this section gives a closed form expression for $M_N(\lambda)$ for all values of $N$.

\subsection{A closed form expression for the generating function of $N$-rooted maps}

\begin{theorem} 
\label{thm_thm2}
The generating function $M_N(\lambda)$ of N-rooted maps, where $N\geq 1$,  is given 
 by
  \ba  M_N (\lambda) =  \sum_{\substack{\alpha_1 + 2 \alpha_2 + \ldots +N \alpha_N = N \\ \alpha_1 \ldots \alpha_N \geq 0 } }~ ~
  \frac{N!}{\alpha_1! \alpha_2 ! \ldots \alpha_N!} ~ 
    \frac{ (-1)^{\alpha_1 + \ldots +\alpha_N -1} ~ \left( \alpha_1 + \ldots  +\alpha_{N}-1 \right)!}{{\cal{Z}}_0^{\alpha_1 + \ldots + \alpha_N}} 
   ~  \prod_{1 \leq j \leq N}  \left( \frac{{\cal{Z}}_{j}}{  (j!)^2} \right)^{\alpha_j}\, , \label{mnn} \ea
  where
  \be {\cal{Z}}_j =  \sum_{k=0}^\infty ~ \frac{(2k+j)! \, (2k-1)!!}{(2k)!} ~ \lambda^{2k} \,, \qquad j\geq 0. \label{theozj} \ee
  The number $m_N(\e)$ of N-rooted maps with $\e$ edges can then be obtained using 
  \be m_N(\e) = \frac{1}{(2\e)!} \lim_{\lambda \rightarrow 0} \, \frac{d^{2\e}}{d \lambda^{2\e}} M_N (\lambda) \, . \label{derivat}\ee
 \end{theorem}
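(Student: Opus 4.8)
The plan is to start from the logarithmic formula \eqref{wndef} for $M_N$ and extract an explicit closed form by carefully expanding the functional derivative of the logarithm. The key observation is that setting the sources $J,\eta,\eta^*$ to zero after taking $N$ pairs of $\eta,\eta^*$ derivatives reduces everything to the generating functions ${\cal{Z}}_j$ defined in \eqref{theozj}. First I would recall from \eqref{zndef} and \eqref{Zc0} that applying $j$ pairs of derivatives $\tfrac{d}{d\eta}\tfrac{d}{d\eta^*}$ to $Z(J,\eta,\eta^*,\lambda)$ and then setting sources to zero produces, up to the factor $Z(0)$, the quantity ${\cal{Z}}_j$; more precisely the object $\ln(Z/Z(0))$ has a Taylor expansion in $\eta\eta^*$ whose coefficients are built from the ${\cal{Z}}_j$.

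The heart of the argument is a Fa\`a di Bruno / multivariate Leibniz computation. The plan is to write $Z(J,\eta,\eta^*,\lambda)/Z(0)$ as a power series in the product $\eta\eta^*$ (by the $U(1)$ symmetry only equal powers of $\eta$ and $\eta^*$ survive), say $1+\sum_{j\ge 1} c_j\,(\eta\eta^*)^j$ with $c_j = {\cal{Z}}_j/\big((j!)^2 {\cal{Z}}_0\big)$ after dividing by $Z(0)$, where the factor $(j!)^2$ arises because extracting the coefficient of $(\eta\eta^*)^j$ costs $j!$ from each of the $\eta$ and $\eta^*$ differentiations as in the second line of \eqref{Kronecker}. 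Then $\ln(Z/Z(0))$ is $\log(1+\sum c_j u^j)$ with $u=\eta\eta^*$, and applying $\tfrac{1}{N!}(\tfrac{d}{d\eta}\tfrac{d}{d\eta^*})^N\big|_0$ amounts to extracting the coefficient of $u^N$ in this logarithm and multiplying by the appropriate factorial factor. I would then use the classical formula for the coefficients of the logarithm of a power series in terms of the partial Bell / multinomial data: the coefficient of $u^N$ in $\log(1+\sum_{j\ge1} c_j u^j)$ is a sum over partitions $\alpha_1+2\alpha_2+\cdots+N\alpha_N=N$ of
\[
\frac{(-1)^{\alpha_1+\cdots+\alpha_N-1}\,(\alpha_1+\cdots+\alpha_N-1)!}{\alpha_1!\cdots\alpha_N!}\prod_{j=1}^N c_j^{\alpha_j}.
\]
Substituting $c_j={\cal{Z}}_j/\big((j!)^2{\cal{Z}}_0\big)$ and collecting the powers of ${\cal{Z}}_0$ (which combine to ${\cal{Z}}_0^{-(\alpha_1+\cdots+\alpha_N)}$) reproduces exactly \eqref{mnn}; the overall $N!$ is what remains after reconciling the $u^N$-coefficient extraction with the normalization $\tfrac{1}{N!}$ in \eqref{wndef} against the $(j!)^2$ bookkeeping.

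The hard part, and the step I would be most careful about, is the combinatorial bookkeeping that links the functional derivative $\tfrac{1}{N!}(\tfrac{d}{d\eta}\tfrac{d}{d\eta^*})^N$ to the ordinary coefficient extraction $[u^N]$, making sure the factorial prefactors come out precisely as the claimed $N!/(\alpha_1!\cdots\alpha_N!)$ together with the $(j!)^2$ in the denominator of each ${\cal{Z}}_j$. One must verify that the contributions from mixed derivatives hitting different factors of $u=\eta\eta^*$ organize into exactly the multinomial coefficient and no stray combinatorial factors survive; this is where an off-by-a-factorial error is easiest to make. Once the closed form \eqref{mnn} is established, the formula \eqref{derivat} for $m_N(\e)$ is immediate: since $M_N(\lambda)=\sum_{\e\ge0} m_N(\e)\lambda^{2\e}$ is an even power series, the coefficient $m_N(\e)$ is recovered by the standard Taylor formula $m_N(\e)=\tfrac{1}{(2\e)!}\lim_{\lambda\to0}\tfrac{d^{2\e}}{d\lambda^{2\e}}M_N(\lambda)$, which requires nothing beyond the definition of the generating function.
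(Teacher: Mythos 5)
Your proposal is correct and takes essentially the same route as the paper: the paper likewise exploits that $Z$ depends on $\eta,\eta^*$ only through $x=\eta\eta^*$ (its identity \eqref{chder}) and then applies Fa\`a di Bruno's formula to $\frac{d^N}{dx^N}\ln f$, which is exactly the derivative form of your partition formula for the coefficients of $\log\bigl(1+\sum_{j\ge 1}c_j u^j\bigr)$, with the same $(j!)^2$ bookkeeping giving $c_j=\mathcal{Z}_j/\bigl((j!)^2\mathcal{Z}_0\bigr)$. The only cosmetic slip is that your series with constant term $1$ is $Z/\bigl(Z(0)\,\mathcal{Z}_0\bigr)$ rather than $Z/Z(0)$, which is harmless for $N\geq 1$ because the extra additive constant $\ln\mathcal{Z}_0$ is annihilated by the $u$-derivatives.
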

 
 \begin{proof} 
 
 We first note that $Z(J,\eta,\eta^*,\lambda)$ as given in Eq.(\ref{zees})  depends on $\eta$ and $\eta^*$ only  through their  product $\eta \eta^*$. Therefore in this proof we will write $Z(J,\eta\eta^*,\lambda)$ instead of $Z(J,\eta,\eta^*,\lambda)$. 
 For a differentiable  function depending only on the product of two variables $\eta$ and $\eta^*$, we have
 \be \left( \frac{d^2}{d\eta d\eta^*} \right)^N \, \ln   f(\eta \eta^*) \biggr|_{\eta=\eta^*=0} = N!  \, \frac{d^N}{dx^N}  \ln f(x) \biggr|_{x=0} \, ,\label{chder} \ee
 which is valid at the condition that $f(0) \neq 0$. 
 We can apply this identity to Eq.(\ref{wndef}) with $f(\eta\eta^*) = Z(J,\eta \eta^*,\lambda)$, which does not vanish at $\eta \eta^*=0$, to obtain
 
 \be M_N (\lambda)  =   
 \, \frac{d^N}{d x^N}  \,   \ln  \Bigl( \frac{Z\left(J,x,\lambda \right)}{Z(0)} \Bigr) \biggr|_{J=x=0}  \, ,\nonumber \,  \ee

where it is understood that the product $\eta \eta^*$ has been replaced by $x$.  

 \vskip 0.2cm
 
\noindent For $ N \geq 1$, 
  we apply Fa\`a di Bruno's formula  to the Nth derivative of  the logarithm of a function, giving us 
  
\ba  M_N (\lambda) & =&  \sum_{\substack{\alpha_1 + 2 \alpha_2 + \ldots +N \alpha_N = N \\ \alpha_1 \ldots \alpha_N \geq 0 } }~
  \frac{N!}{\alpha_1! \alpha_2 ! \ldots \alpha_N!} ~  
  \frac{ (-1)^{\alpha_1 + \ldots +\alpha_N -1} ~ \left( \alpha_1 + \ldots  +\alpha_{N}-1 \right)!}{ 
   {\cal{Z}}_0(\lambda)^{\alpha_1 + \ldots + \alpha_N}} 
    \times 
  \nonumber \\ &~& \qquad   \qquad   \qquad \qquad \qquad \qquad \qquad \qquad \qquad  \qquad  \qquad 
  ~  \prod_{j=1}^N  \left( \frac{1}{Z(0)} \frac{1}{ j!} \frac{d^j}{dx^j}  Z(J,x,\lambda)  \right)^{\alpha_j}  \Biggr|_{J=x=0}\, , \nonumber
   \ea
   where we have used the notation of Eq.(\ref{zndef}) to write
   \be \nonumber  \frac{Z(J,x,\lambda)}{Z(0)}  \Bigr|_{J=x=0} = {\cal{Z}}_0(\lambda) \,  . \ee
 Now we use once more Eq.(\ref{chder}) to rewrite the expression   in terms of $\eta$ and $\eta^*$ instead of x, giving us 
 \ba &~&  M_N (\lambda)  =   \sum_{\substack{\alpha_1 + 2 \alpha_2 + \ldots +N \alpha_N = N \\ \alpha_1 \ldots \alpha_N \geq 0 } }~
  \frac{N!}{\alpha_1! \alpha_2 ! \ldots \alpha_N!} ~
   \frac{ (-1)^{\alpha_1 + \ldots +\alpha_N -1} ~ \left( \alpha_1 + \ldots  +\alpha_{N}-1 \right)!}{   
   \, {\cal{Z}}_0(\lambda)^{\alpha_1 + \ldots + \alpha_N}} 
  \nonumber \\ &~&  \qquad   \qquad  \qquad  \qquad \qquad \qquad \qquad  \qquad  \qquad    \times  \prod_{j=1}^{  N}  \left( \frac{1}{Z(0)}  \frac{1}{  (j!)^2} \frac{d^{2j}}{d \eta^j d \eta^{*j}}
 Z(J,\eta,\eta^*,\lambda)   \right)^{\alpha_j}  \Biggr|_{J=\eta=\eta^*=0}\, .  \nonumber
   \ea
 Using again the notation of Eq.(\ref{zndef}), we obtain Eq.(\ref{mnn}) (where the dependence on $\lambda$ of   ${\cal{Z}}_0$ and ${\cal{Z}}_j$  is omitted). As for the expression for ${\cal{Z}}_j$ of Eq.(\ref{theozj}), it is already calculated in Eq.(\ref{Zc0}).

\end{proof}

    \subsection{An algorithm to derive a formula for $m_N(\e)$}
  
Here we discuss an alternative to Eq.\eqref{derivat} from Theorem \ref{thm_thm2} way to compute generating functions for $N$-rooted graphs. This is a generalization of the approach used in Section \ref{sect_onerooted} to obtain $M_1(\lambda)$ and $m_1(e)$. Recall that the first step in Section \ref{sect_onerooted} was to express ${{\cal{Z}}_1}$ in terms  of ${\cal{Z}}_0$ and  its derivative  with respect to $\lambda$. 
It is also simple to write an explicit expression for ${\cal{Z}}_N$ in terms of ${\cal{Z}}_0$ and derivatives of ${\cal{Z}}_0$. From expression (\ref{Zc0}) for  $\mathcal Z_N$, we have
 \ba {\cal{Z}}_N  &=&\sum_{k=0}^\infty \frac{(2k+N)! (2k-1)!!}{(2k)!} \lambda^{2k} \,  \nonumber \\
 &=& \left(N + \lambda \frac{d}{d\lambda} \right) Z_{N-1}  \, ,\nonumber \ea
 so that 
 
 \ba  {\cal{Z}}_{N} &=& \left(N+ \lambda \frac{d}{d \lambda} \right) \left(N-1+ \lambda \frac{d}{d \lambda} \right) \ldots 
\left(1+ \lambda \frac{d}{d \lambda} \right) {\cal{Z}}_0  \,  \nonumber \\
&=&\sum_{k=0}^N  \binom{N}{k} \frac{N!}{k!} ~ \lambda^k ~ \frac{d^k {\cal{Z}}_0 }{d\lambda^k}  \,   \label{hgt} \\
&=&  \frac{d^N}{d \lambda^N}  \bigl( \lambda^N Z_0 \bigr)  \, .    \label{general}   \ea

\noindent Using this, in the remaining part of this section, we compute $M_2$ and $M_3$.

\begin{paragraph}{Calculating $M_2$.}
  
  From Eq.(\ref{mnn}) for $M_N(\lambda)$ from Theorem \ref{thm_thm2}, we have 

 \be M_2 =\frac{1}{2} \frac{{\cal{Z}}_2}{{\cal{Z}}_0} - \biggl( \frac{{\cal{Z}}_1}{{\cal{Z}}_0}\biggr)^2 \,.  \label{shs}\ee
\noindent Using again expression \eqref{Zc0} for $\mathcal Z_N$, this leads to the following Taylor expansion

 \ba \nonumber M_2 &=&
     \lambda^2 + 13 \,   \lambda^4 + 165 \,   \lambda^6 + 2\,273 \,   \lambda^8 + 34 \,577  \,  \lambda^{10} +
 581 \, 133 \, \lambda^{12} \ldots\;.
  \ea

\noindent In order to obtain  an explicit expression for $ m_2(e)$  
we may rewrite formula \eqref{shs} for $M_2 $ 
as an expansion in powers of $\lambda$ symbolically. From \eqref{general} we have

\begin{equation*}
{\cal{Z}}_2 =  \lambda^2 {\cal{Z}}_0''+ 4 \lambda {\cal{Z}}_0'+  2{\cal{Z}}_0 
\qquad\mbox{and}\qquad
{\cal{Z}}_1  =  \lambda{\cal{Z}}_0'+ {\cal{Z}}_0 \,.
\end{equation*}

\noindent With this and Eq.\eqref{mzero}, Eq.(\ref{shs}) becomes
   \ba M_2 
   &=& \frac{1}{2}   \frac{\lambda^2}{{\cal{Z}}_0}  {\cal{Z}}_0''-  \frac{\lambda^2}{{\cal{Z}}_0^2} \bigl(   {\cal{Z}}_0'\bigr)^2 \, ,
   \nonumber \\
   &=&
   \frac{\lambda^2}{2}  M_0''- \frac{\lambda^2}{2} \left( M_0' \right)^2  \, .
   \label{w20log}
   \ea
  
  \noindent  Using  the explicit expression (\ref{logid}) for $M_0$ and the relation $\lambda  \,  M_0' =M_1- 1$  from (\ref{w10}),  we obtain 
  \ba M_2 &=&
   \sum_{e=1}^\infty  \Biggl[
    \,  e (2e-1) ~\lambda^{2e} ~\sum_{k=1}^e  \frac{(-1)^{k+1}}{k} \, 
  \sum_{\substack{\mu_1 + \ldots + \mu_k=e \\ \mu_i \neq 0 } } ~\prod_{j=1}^{k}  \, ~(2 \mu_j-1)!!  
  \Biggr] - \frac{1}{2} \bigl(M_1-1 \bigr)^2  \, .
  \ea

\noindent Rewriting this in terms of $  M_1
= \sum_{k=0} m_1(k) \lambda^{2k}$ with $m_1(k)$ given in Eq.(\ref{tata}),
 we obtain

\ba
M_2 &=& 
   \lambda^2 +
  \sum_{e=2}^\infty \lambda^{2e} \Biggl[
   \,  e (2e-1)  ~\sum_{k=1}^e  \frac{(-1)^{k+1}}{k} \, 
  \sum_{\substack{\mu_1 + \ldots + \mu_k=e \\ \mu_i \neq 0 } } ~\prod_{j=1}^{k}  \, ~(2 \mu_j-1)!!  
- \frac{1}{2} \sum_{k=1}^{e-1} m_1(k) m_1(e-k)  \Biggr]  \nonumber.
  \ea

\noindent This may be written as 
\ba
M_2 &=& 
\lambda^2  + 
\sum_{e=2}^\infty \lambda^{2e} \Biggl[  \sum_{k=0}^e (-1)^k 
 \sum_{\substack{\mu_1 + \ldots + \mu_{k+1}=e+1 \\ \mu_i \neq 0 } }   \mu_{k+1} ~\prod_{j=1}^{k+1}  \,~(2 \mu_j-1)!!  
-   ~\frac{1}{2} 
  \sum_{k=1}^{e-1} m_1(k) m_1(e-k)  \Biggr]  \, .  \qquad\label{w20}
\ea

\noindent Now we can read off the values of $m_2(e)$ from Eq.(\ref{w20}) since by definition 
(\ref{mnz}) we have
$ M_2 = \sum_{e=0}^\infty m_2(e) \, \lambda^{2e}$.

 \end{paragraph}

 \begin{paragraph}{Calculating $M_3 $.}

As an another example, one finds from Eq.(\ref{mnn}) of Theorem \ref{thm_thm2}
\ba  M_3   &=& 
\frac{1}{6}  \frac{{\cal{Z}}_3}{{\cal{Z}}_0}-   \frac{3}{2} \frac{{\cal{Z}}_1 {\cal{Z}}_2}{{\cal{Z}}_0^2}  + 2 \Biggl(\frac{{\cal{Z}}_1}{{\cal{Z}}_0} \Biggr)^3  
 \nonumber \\
 &=&  6 \,  \lambda^4 + 172 \,  \lambda^6 + 3\,834 \,  \lambda^8  + 81 \, 720  \, \lambda^{10} + 
 1\, 775 \, 198 \, \lambda^{12} + 
  \ldots   \;.
 \label{w30}
 \ea
  
\noindent We may express ${\cal{Z}}_3$ in terms of ${\cal{Z}}_0$ using Eq.(\ref{general}):
 
\ba {\cal{Z}}_3 
 &=& \lambda^3 {\cal{Z}}_0''' + 9 \lambda^2 
  {\cal{Z}}_0'' + 18 \lambda   {\cal{Z}}_0' + 6  {\cal{Z}}_0,
 \ea
giving us 
\ba M_3 &=&  2 \lambda^3 \left( \frac{{\cal{Z}}_0'}{{\cal{Z}}_0}  \right)^3 -  \frac{3}{2} \lambda^3  \frac{{\cal{Z}}_0' {\cal{Z}}_0''}{{\cal{Z}}_0^2} + \frac{\lambda^3}{6} \frac{ {\cal{Z}}_0'''}{{\cal{Z}}_0} \, , \nonumber
\\ &=&
\frac{2}{3} \lambda^3 \bigl( M_0' \bigr)^3 - \lambda^3 M_0' \, M_0''+ \frac{\lambda^3}{6}  M_0'''. \label{bigw30}
\ea

\noindent This could be  written as an explicit expansion in $\lambda$  using again Eq.(\ref{logid}).
\end{paragraph}

\section{Relating  generating functions of $N$-rooted maps to $M_1$}
\label{sect_equation}

Recall that the generating function $M_1(\lambda)$ of one-rooted maps satisfies differential equation \eqref{diffeqm1}. As a generalization of this equation to the case of $N$-rooted maps, we find that all generating functions $M_N(\lambda)$ can be expressed in terms of $M_1(\lambda)$.

\begin{theorem} 
\label{thm_thm3}
The generating function for N-rooted maps $M_N(\lambda)$  can be expressed as a polynomial of degree $N$ in the  generating function $M_1(\lambda)$ for one-rooted maps. This polynomial has $\lambda$-dependent coefficients and is obtained by substituting the following expression for $\mathcal Z_j/\mathcal Z_0$ into Eq.(\ref{mnn}) of Theorem \ref{thm_thm2}:
\ba \frac{\mathcal Z_j}{\mathcal Z_0} &=& \sum_{n=0}^j \binom{j}{n} \frac{j!}{n!} \sum_{k=0}^n (-1)^{n-k} ~ B_{n,2k-1}  \nonumber \\ &~&~~~~~~\times\Biggl[ \delta_{k,0} + H(k-1) ~\frac{M_1}{\lambda^{2k-2}} - H(k-2) \, (1- M_1 \lambda^2)
\,   \sum_{m=0}^{k-2} \frac{(2m+1)!!}{\lambda^{2k-2m-2}}   \Biggr]\,.\qquad
\label{theorem3} \ea 
Here $H$ is the Heaviside function with the convention $H(k) = 1 $ for $ k \geq 0$ and the coefficients $B$ are obtained from the recursion formula 
\ba
B_{n+1,2k-1} = B_{n,2k-3} + (2k+n+1) B_{n,2k-1},~~~~n \geq 1,~~~~n \geq k \geq 0  \label{recur}
\ea
with initial conditions 
\ba &~& B_{0,-1} =B_{1,-1} = B_{1,1} = 1, \label{ini1} \\
&~& B_{n,-3} = B_{n,2n+1} = 0. \label{ini2} \ea

\end{theorem}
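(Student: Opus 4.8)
The plan is to reduce the entire statement to one structural fact: that each ratio $\mathcal{Z}_j/\mathcal{Z}_0$ is an affine (degree $\leq 1$) function of $M_1$ with $\lambda$-dependent coefficients, after which Theorem~\ref{thm_thm2} supplies the rest. Collecting the powers of $\mathcal{Z}_0$ in Eq.~(\ref{mnn}) shows that $M_N$ is assembled entirely from the products $\prod_{j}(\mathcal{Z}_j/\mathcal{Z}_0)^{\alpha_j}$ subject to $\sum_j j\alpha_j = N$. If every factor is linear in $M_1$, such a product has $M_1$-degree $\sum_j \alpha_j$, and under the constraint this is maximized precisely by $\alpha_1 = N$ (all other $\alpha_j = 0$), yielding degree $N$; every other partition gives strictly smaller degree since $j \geq 1$ forces $\sum_j \alpha_j \leq \sum_j j\alpha_j = N$. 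Thus $M_N$ is a polynomial of degree exactly $N$ in $M_1$, the leading term arising only from the $\alpha_1 = N$ summand with coefficient $(-1)^{N-1}(N-1)!$. So the whole theorem rests on establishing the displayed expression (\ref{theorem3}) for $\mathcal{Z}_j/\mathcal{Z}_0$, which is manifestly affine in $M_1$.

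To obtain (\ref{theorem3}) I would start from the already-proved identity (\ref{hgt}), rewritten as $\mathcal{Z}_j/\mathcal{Z}_0 = \sum_{n=0}^{j}\binom{j}{n}\frac{j!}{n!}\,P_n$ with $P_n := \lambda^n \mathcal{Z}_0^{(n)}/\mathcal{Z}_0$; this is exactly the outer sum in (\ref{theorem3}), so it suffices to prove the inner identity $P_n = \sum_{k=0}^{n}(-1)^{n-k}B_{n,2k-1}E_k$, where $E_k$ denotes the bracketed expression in (\ref{theorem3}). The engine is a first-order recursion for the $P_n$: differentiating $\lambda^n\mathcal{Z}_0^{(n)}$ and using $\lambda\,\mathcal{Z}_0'/\mathcal{Z}_0 = M_1 - 1$ from Eqs.~(\ref{zop})--(\ref{w10}) gives
\be P_{n+1} = \lambda\,P_n' + (M_1 - 1 - n)\,P_n\,, \qquad P_0 = 1. \nonumber \ee
Because applying $\lambda\,d/d\lambda$ reintroduces a derivative of $M_1$, I would then eliminate $M_1'$ using the differential equation (\ref{diffeqm1}) in the form $\lambda^3 M_1' = M_1 - 1 - \lambda^2 M_1 - \lambda^2 M_1^2$. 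The crucial point is that the quadratic $M_1^2$ term so produced is exactly cancelled by the $M_1\cdot M_1$ contribution of the factor $(M_1 - 1 - n)P_n$, so that $P_{n+1}$ remains affine in $M_1$; this cancellation forces linearity at every step and is the conceptual core of the argument.

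The remaining, and most laborious, step is to verify that, written in the basis $\{E_k\}$, the $P_n$-recursion becomes precisely the two-term recursion (\ref{recur}) for the coefficients $B_{n,2k-1}$. Concretely I would compute $\lambda\,E_k' + (M_1 - 1 - n)E_k$, substitute the expression for $M_1'$, and re-expand in the $E_k$-basis; the Heaviside cutoffs and the finite sums $\sum_m (2m+1)!!\,\lambda^{-(2k-2m-2)}$ in the definition of $E_k$ are engineered so that this re-expansion produces only the combination $B_{n,2k-3} + (2k+n+1)B_{n,2k-1}$, matching (\ref{recur}). I expect this bookkeeping---tracking the index shifts $k \mapsto k\pm 1$, the boundary terms from the Heaviside functions, and the telescoping of the double-factorial sums---to be the main obstacle, since it is where the specific form of $E_k$ and the initial data (\ref{ini1})--(\ref{ini2}) must be pinned down.

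Once the $B$-recursion is matched, an induction on $n$ (with base cases $P_0 = 1$ and $P_1 = M_1 - 1$, which reproduce $B_{0,-1} = B_{1,-1} = B_{1,1} = 1$) identifies $P_n$ with $\sum_{k}(-1)^{n-k}B_{n,2k-1}E_k$ and hence establishes (\ref{theorem3}). Substituting this affine-in-$M_1$ expression for $\mathcal{Z}_j/\mathcal{Z}_0$ into Eq.~(\ref{mnn}) and invoking the degree count of the first paragraph then shows that $M_N$ is a degree-$N$ polynomial in $M_1$ with $\lambda$-dependent coefficients, completing the proof.
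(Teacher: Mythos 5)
Your proposal is correct, and its overall architecture (the degree count from Eq.~(\ref{mnn}), the expansion of $\lambda^n\mathcal{Z}_0^{(n)}/\mathcal{Z}_0$ in the basis of brackets appearing in Eq.~(\ref{theorem3}), and a recursion obtained by differentiating) matches the paper's, but your engine for deriving the recursion is genuinely different. The paper never invokes the differential equation (\ref{diffeqm1}): it introduces the auxiliary series $R_i(\lambda)=\sum_{k\ge 0}(2k+i)!!\,\lambda^{2k}$, proves the linear shift identity $\lambda R_k' = R_{k+2}-(k+2)R_k$, expands $\lambda^n\mathcal{Z}_0^{(n)}=\sum_k(-1)^{n-k}B_{n,2k-1}R_{2k-1}$ \emph{upstairs}, before dividing by $\mathcal{Z}_0$, and reads off (\ref{recur}) with no nonlinearity anywhere; $M_1$ enters only at the very end through $\mathcal{Z}_0-1=\lambda^2 M_1\mathcal{Z}_0$ and $1/\mathcal{Z}_0=1-\lambda^2 M_1$. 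You instead work downstairs with $P_n=\lambda^n\mathcal{Z}_0^{(n)}/\mathcal{Z}_0$; your recursion $P_{n+1}=\lambda P_n'+(M_1-1-n)P_n$ is correct, and your claimed quadratic cancellation is real: writing $P_n=a_n+b_nM_1$ and eliminating $M_1'$ via $\lambda^3 M_1'=M_1-1-\lambda^2 M_1-\lambda^2 M_1^2$, the $-b_nM_1^2$ from the substitution cancels the $+b_nM_1^2$ from $(M_1-1-n)P_n$, so affineness in $M_1$ propagates. Moreover, the coefficient-matching you flag as the main obstacle actually collapses: your brackets $E_k$ are nothing but $R_{2k-1}/\mathcal{Z}_0$, whence
\[ \lambda E_k' + (M_1-1-n)\,E_k \;=\; E_{k+1} - (2k+n+1)\,E_k\,, \]
since the $\lambda$-derivative of $1/\mathcal{Z}_0$ contributes exactly $-(M_1-1)E_k$; matching coefficients then yields (\ref{recur}) in one line, with the Heaviside cutoffs and double-factorial sums handled automatically, and your base cases $P_0=1$, $P_1=M_1-1$ pin down (\ref{ini1})--(\ref{ini2}) as you say. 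In short, your route is the paper's computation conjugated by division by $\mathcal{Z}_0$: it is valid, but it costs the extra ingredient (\ref{diffeqm1}) (itself established via the appendix) and a cancellation argument that the paper's choice of basis makes unnecessary, while buying, as a small bonus beyond the paper, the explicit leading coefficient $(-1)^{N-1}(N-1)!$ of $M_1^N$ (since $\mathcal{Z}_1/\mathcal{Z}_0=M_1$ exactly), which confirms the degree is exactly $N$.
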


\begin{remark}
 The recursion formula (\ref{recur})  can be solved for each given value of $k$. For example,
\ba B_{n,-1} &=& n!, \nonumber \\
B_{n,2n-1} &=& 1, \nonumber \\
B_{n,2n-3} &=& \frac{(3n-1)n}{2}. \nonumber \ea
\end{remark}

\begin{proof} We only  need to prove relation \eqref{theorem3} for $\mathcal Z_j/\mathcal Z_0$. Throughout this proof we will indicate explicitly the $\lambda$ dependence of the various  quantities.
We first define the following quantity for odd $i$ only
\be R_i(\lambda) := \sum_{k=0}^\infty (2k + i)!! ~ \lambda^{2k} ~,~~~i \geq -1,~~\text{ odd } i .\ee
In particular, 
\be R_{-1}(\lambda) = \mathcal Z_0 (\lambda) \, . \label{rone}  \ee 
We may express all the other $R_i(\lambda)$ in terms of $\mathcal Z_0$. It is easy to check that
\be R_1(\lambda) = \frac{ \mathcal Z_0(\lambda) -1}{\lambda^2}  \, ,  \label{rtwo} \ee and 
\be R_j(\lambda) = ~\frac{\mathcal Z_0(\lambda)-1}{\lambda^{j+1}} - \sum_{m=0}^{\frac{j-3}{2}} \frac{(2m+1)!!}{\lambda^{j-2m-1}}, ~~~~~ j \geq 3 .   \label{rthree} \ee
It follows that
\be \lambda \frac{d R_k(\lambda)}{d \lambda} =  R_{k+2}(\lambda) - (k+2) R_k(\lambda),  ~~~~~~k \geq -1\, . \label{deri} \ee

\noindent We now consider $\lambda^n \mathcal Z_0^{(n)}(\lambda)$ where the  index $(n)$ indicates the number of derivatives with respect to $\lambda$. For example 
\ba \lambda \mathcal Z_0^{(1)} &=& \lambda \frac{d}{d\lambda} R_{-1} (\lambda) \nonumber \\
&=& 	R_1 (\lambda) - R_{-1} (\lambda) \label{roog}\,. \ea  
We see that $\lambda^n \mathcal Z_0^{(n)}$ will contain $n+1$ terms and will take the form
\be
\lambda^n \mathcal Z_0^{(n)}  (\lambda) = \sum_{k=0}^n  (-1)^{n-k} \, B_{n,2k-1} ~R_{2k-1} (\lambda)\,,  \label{oop} \ee
where the  sign has been introduced  to make the coefficients $B_{n,2k-1}$ positive.  We obtain a recursion formula for the coefficients $B$ by taking a derivative of both sides with respect to $\lambda$ and then multiplying the result by one power of $\lambda$. This gives
\be n \lambda^n  \mathcal Z_0^{(n)}  (\lambda) + \lambda^{n+1} \mathcal Z_0^{(n+1)}  (\lambda) =\sum_{k=0}^n  (-1)^{n-k} \,  B_{n,2k-1} ~ \lambda \frac{d}{d \lambda} R_{2k-1} (\lambda) \;.
\ee
 Using Eq.(\ref{oop})  for the terms on the left side and Eq.(\ref{deri}) to evaluate the derivative of $R_{2k-1} (\lambda)$, we obtain the  recursion formula \eqref{recur}
 with initial conditions \eqref{ini1}, \eqref{ini2}.
 
 \vskip 0.2cm
 
\noindent  Using Eqs.(\ref{rone}), (\ref{rtwo}), and (\ref{rthree}), we write Eq.(\ref{oop}) as
 \be
 \lambda^n \mathcal Z_0^{(n)}  (\lambda) = \sum_{k=0}^n  (-1)^{n-k} \, ~B_{n,2k-1} (\lambda) \Bigl[\delta_{k,0} ~\mathcal Z_0(\lambda) + H(k-1) ~\frac{\mathcal Z_0 (\lambda) -1}{\lambda^{2k}}  -H(k-2) \sum_{m=0}^{k-2} \frac{(2m+1)!!}{\lambda^{2k-2m-2}} \Bigr]\;. \label{fff}
 \ee
  From Eq.(\ref{zop}), we have
 \be M_1 =  \lambda \frac{\mathcal Z_0' (\lambda)}{\mathcal Z_0 (\lambda)} +1 \, . \label{m1der}\ee
The derivative $\mathcal Z_0'(\lambda)$ can be expressed in terms of $\mathcal Z_0 (\lambda)$ using Eqs.(\ref{rone}), (\ref{rtwo}), and (\ref{roog}). This leads to 
 \be\nonumber
 \mathcal Z_0' (\lambda) = \frac{ \mathcal Z_0 (\lambda) - 1 - \lambda^2 \mathcal Z_0 (\lambda) }{\lambda^3} \, .
 \ee
 Using this result and Eq.(\ref{m1der}), 
 we may write 
 \ba \mathcal Z_0 (\lambda)  - 1  =  \lambda^2 M_1(\lambda) \mathcal Z_0 (\lambda)  \label{abv1}\ea
 and therefore \ba \frac{1}{\mathcal Z_0(\lambda) } = 1 - \lambda^2  M_1 (\lambda)  \label{abv2} \, .\ea
Using Eqs.(\ref{abv1}), (\ref{abv2}), and (\ref{fff})  we obtain

\ba
\frac{1}{\mathcal Z_0(\lambda) } \, \lambda^n \mathcal Z_0^{(n)}  (\lambda)&=& \sum_{k=0}^n  (-1)^{n-k} \, ~B_{n,2k-1} 
 \nonumber \\
&~&~~~\times\Bigl[\delta_{k,0} ~ + H(k-1) ~\frac{M_1 (\lambda) }{\lambda^{2k-2}}  -H(k-2)\,\bigl(1- \lambda^2 M_1(\lambda) \bigr)  \sum_{m=0}^{k-2} \frac{(2m+1)!!}{\lambda^{2k-2m-2}} \Bigr] \;.\qquad \label{just}
 \ea

\noindent Recall that from expression (\ref{hgt}) of $\mathcal Z_N$ in terms of derivatives of $\mathcal Z_0$, we have
 \be  \frac{\mathcal Z_j (\lambda) }{\mathcal Z_0(\lambda)} = \sum_{n=0}^j \binom{j}{n} \frac{j!}{n!} \,   \frac{1}{\mathcal Z_0(\lambda)} \, \lambda^n \, 
  \mathcal Z_0^{(n)}(\lambda) \, . \nonumber 
 \ee
 Now inserting Eq.(\ref{just}) into this last equation, we obtain the required expression (\ref{theorem3}) for the ratio $\mathcal Z_j/\mathcal Z_0$.
 
 \noindent It is now a simple matter to show that the result for $M_N$ is a polynomial of degree $N$ in $M_1$.  From Eq.(\ref{theorem3}) we see that $  \mathcal Z_j / \mathcal Z_0$ is a polynomial of order one in $M_1$, for any value of $j$.   According to Eq.(\ref{mnn}), $M_N(\lambda)$  thus contains a linear combination of terms of the form 
 \be M_1^{\alpha_1 + \alpha_2 + \ldots + \alpha_N},  \nonumber \ee
 with the condition $\alpha_1 +2  \alpha_2 + \ldots + N \alpha_N = N$. The largest power of $M_1$ is thus found by maximizing the sum  
 $\alpha_1 + \ldots + \alpha_ N$ while satisfying this condition. It is clear that this corresponds to choosing $\alpha_1=N$ and all the other indices equal to zero, giving us that $M_N$ is a polynomial of order $N$.
 \end{proof}
 
 \begin{example}

Here are the generating functions of $N$-rooted maps in terms of $M_1$ for $N=1,\dots, 5$. 
 \ba 2 \lambda^2 M_2 &=& M_1 - 1 - 2 \lambda^2 M_1^2 \, , \nonumber \\ 
 \nonumber\\
 6 \lambda^4 M_3 &=&  M_1 - 1 - 9 \lambda^2 M_1^2 + 7 \lambda^2 M_1 + 12 \lambda^4  M_1^3 \, , \nonumber \\
  \nonumber\\
 24 \lambda^6 M_4 &=&  M_1-1 - 15 \lambda^2 + 47 \lambda^2 M_1 -34 \lambda^2 M_1^2 -112 \lambda^4 M_1^2 +144 \lambda^4 M_1^3 -144 \lambda^6  
 M_1^4 \, , \nonumber \\
  \nonumber\\
 120 \lambda^8 M_5 &=& M_1-1 - 93 \lambda^2 + 216 \lambda^2 M_1 + 633 \lambda^4 M_1 -125 \lambda^2 M_1^2  \nonumber \\ &~&~~ - 1  \,875 \lambda^4 M_1^2 
+ 1  \,300 \lambda^4 M_1^3   + 2  \,800 \lambda^6 M_1^3 - 3  \,600 \lambda^6 M_1^4 + 2  \, 880 \lambda^8 M_1^5 \, . \nonumber 
\ea
 
 \end{example}

\appendix\markboth{}{}
\renewcommand{\thesection}{\Alph{section}}
\numberwithin{equation}{section}
\section{Appendix}
\section*{The differential equation for $M_0$}
\label{sect_appendix}
Here we present the derivation of the  differential equation for $M_0$ given in  Eq.(\ref{wprime}) using quantum field theory. Let us recall that $M_0$ is defined by (see Eq.(\ref{wndef}))
\be
M_0 (\lambda)  =   
 \,   \ln \left( \frac{ Z\left(J,\eta,\eta^*,\lambda\right)}{Z(0) } \right)  \biggr|_{J=\eta=\eta^*=0} \, , \label{mno}
\ee
where $Z(0) = \pi \sqrt{2 \pi} $. 

\noindent Our starting point is Eq.(\ref{partfunc}) which may be written as

\ba Z (J,\eta,\eta^*,\lambda) &=&  \pi  \sqrt{ 2 \pi} \, \sum_{k=0}^\infty \frac{\lambda^k}{k!}   \left( \hat{A} \hat{\phi} \hat{\phi}^* \right)^k ~ \exp \left( \eta \eta^* + \frac{J^2}{2} \right)  \nonumber  \\
&=&  \pi  \sqrt{ 2 \pi} \, \exp \left(\lambda  \hat{A} \hat{\phi} \hat{\phi}^* \right)~ \exp \left( \eta \eta^* + \frac{J^2}{2} \right), \label{eee}
\ea where we have defined the operators
\ba \hat{A} &:=& \frac{d}{dJ}, \nonumber  \\
\hat{\phi} &:=& \frac{d}{d\eta}, \nonumber  \\
\hat{\phi}^* &:=& \frac{d}{d \eta^*} \, . \nonumber 
\ea
\noindent Our goal is to obtain a differential equation for the partition function $Z$ and its derivatives with respect to the coupling constant $\lambda$. Let us then consider the derivative of $Z$ with respect to $\lambda$:
\be  \frac{d Z (J,\eta,\eta^*,\lambda)}{d \lambda} =  \hat{A} \hat{\phi} \hat{\phi}^* ~ Z(J,\eta,\eta^*,\lambda)  \, . \label{zprime}  \ee
\noindent The next step is to express the right hand side in terms of $Z$ and its derivatives. For this, we begin by calculating the application of $\hat{\phi}$ on the partition function:
\ba \hat{\phi} ~Z(J,\eta,\eta^*,\lambda)   &=&
 \pi  \sqrt{ 2 \pi} \, \exp \left(\lambda  \hat{A} \hat{\phi} \hat{\phi}^* \right)~\hat{\phi}~ \exp \left( \eta \eta^* + \frac{J^2}{2} \right)  \nonumber  \\ 
&=&  \pi  \sqrt{ 2 \pi}
\sum_{k=0}^\infty  \frac{    \left(  \lambda \hat{A} \hat{\phi} \hat{\phi}^* \right)^k   }{k!} 
 \,~\eta^*~ \exp \left( \eta \eta^* + \frac{J^2}{2} \right)  \nonumber  \\ 
 &=& 
  \pi  \sqrt{ 2 \pi}
\sum^\infty _{k=0} \frac{   \left( \lambda  \hat{A} \hat{\phi} \right)^k  }{k!}   ~
\left( k ( \hat{\phi}^*)^{k-1} + \eta^* (\hat{\phi}^*)^k \right) ~ \exp \left( \eta \eta^* + \frac{J^2}{2} \right) \nonumber   \\ 
&=&
 \pi  \sqrt{ 2 \pi} \bigl(  \lambda \hat{A} \hat{\phi}  + \eta^* \bigr) 
\sum^\infty _{k=0} \frac{\lambda^k}{k!}   \left( \hat{A} \hat{\phi}  \hat{\phi}^* \right)^k 
~ \exp \left( \eta \eta^* + \frac{J^2}{2} \right)  \nonumber  \\ 
&=&  \bigl(  \lambda \hat{A} \hat{\phi}  + \eta^* \bigr) ~ Z(J,\eta,\eta^*,\lambda) \, . \label{ii1}
 \ea
 \noindent Following a similar approach, we find
 \ba
  \hat{\phi}^* ~Z(J,\eta,\eta^*,\lambda)   &=&   \bigl( \lambda  \hat{A} \hat{\phi}^*  + \eta \bigr) ~ Z(J,\eta,\eta^*,\lambda) \, , \label{ii2} 
  \\
   \hat{A} ~Z(J,\eta,\eta^*,\lambda)   &=&   \bigl( \lambda  \hat{\phi}  \hat{\phi}^* + J \bigr) ~ Z(J,\eta,\eta^*,\lambda) \, . \label{ii3}
   \ea
\noindent   Equations (\ref{ii1}),(\ref{ii2}) and (\ref{ii3}) are examples of Dyson-Schwinger equations.  Now consider
   \ba
    \hat{\phi}  \hat{\phi}^* ~Z(J,\eta,\eta^*,\lambda)   &=&  \hat{\phi}  \left( \lambda \hat{A} \hat{\phi}^*  + \eta 
    \right) ~ Z(J,\eta,\eta^*,\lambda) \, \nonumber   \\ &=&
     \left( \lambda \hat{A}  \hat{\phi} \hat{\phi}^*  + 1 +  \eta \hat{\phi} 
    \right) ~ Z(J,\eta,\eta^*,\lambda) \, . \label{oo1}
    \ea
 \noindent    At first sight this result may appear suspicious. Indeed, the operators $\hat{\phi}$ and $\hat{\phi}^*$ commute but the right hand side would have contained a term $\eta^* \hat{\phi}^* Z$ if we had calculated  $  \hat{\phi}^*  \hat{\phi} ~Z $ instead.   The two expressions would be different if $Z$ was an arbitrary function of $\eta$ and $\eta^*$ but they coincide when applied to the partition function of  Eq.(\ref{eee}). Of course, equations (\ref{ii1}), (\ref{ii2}) and (\ref{ii3}) are also valid only because of the form (\ref{eee}) of the partition function.
 
 \vskip 0.2cm

\noindent  We now apply $\hat{A}$ to Eq.(\ref{oo1}):
 \ba 
  \hat{A}   \hat{\phi} \hat{\phi}^*Z(J,\eta,\eta^*,\lambda) &=&
\hat{A}   \left( \lambda \hat{A}  \hat{\phi} \hat{\phi}^*  + 1 +  \eta \hat{\phi} 
    \right) ~ Z(J,\eta,\eta^*,\lambda)  \nonumber  \\
    &=&   \left( \lambda \hat{A} \hat{\phi} \hat{\phi}^*  + 1 +  \eta \hat{\phi} 
    \right) ~ \left(\lambda \hat{\phi} \hat{\phi}^* + J \right) Z(J,\eta,\eta^*,\lambda) \nonumber   \\
    &=&
     \left( \lambda \hat{A} \hat{\phi} \hat{\phi}^*  + 1 +  \eta \hat{\phi} 
    \right) ~ \left(\lambda \hat{\phi} \left(    \lambda  \hat{A} \hat{\phi}^*  + \eta  \right) + J \right) Z(J,\eta,\eta^*,\lambda) \nonumber  \\
    &=&  \left( \lambda \hat{A} \hat{\phi} \hat{\phi}^*  + 1 +  \eta \hat{\phi} 
    \right) ~ \left(  \lambda^2  \hat{A} \hat{\phi}   \hat{\phi}^*  + \lambda + \lambda  \eta \hat{\phi}  + J \right) Z(J,\eta,\eta^*,\lambda) \, , \label{yank}
    \ea
    where in the third step we have used Eq.(\ref{ii2}).
    
\noindent The only nontrivial terms to calculate are 
\ba  \lambda \hat{A} \hat{\phi} \hat{\phi}^* \left( \lambda  \eta \hat{\phi}  \right) Z(J,\eta,\eta^*,\lambda) &=& \left( \lambda^2  \hat{A} \hat{\phi} \hat{\phi}^*+
  \lambda^2 \eta \hat{A} \hat{\phi}^2 \hat{\phi}^* \right)Z(J,\eta,\eta^*,\lambda) \, , \nonumber   \\
  \lambda \hat{A} \hat{\phi} \hat{\phi}^* \, J  \, Z(J,\eta,\eta^* \, \lambda) &=&
  \left(  \lambda  \hat{\phi} \hat{\phi}^* +  \lambda J  \hat{A} \hat{\phi} \hat{\phi}^* \right) Z(J,\eta,\eta^*,\lambda)  \nonumber   \\
  &=&   \left( \lambda^2 \hat{A}  \hat{\phi} \hat{\phi}^*  + \lambda +  \lambda \eta \hat{\phi} 
  +  \lambda J  \hat{A} \hat{\phi} \hat{\phi}^*    \right) Z(J,\eta,\eta^*,\lambda)  \, ,  \nonumber \\
   \eta \hat{\phi}  \left( \lambda  \eta \hat{\phi}  \right) Z(J,\eta,\eta^*,\lambda) &=& \left(  \lambda  \eta \hat{\phi}  +  \lambda  \eta^2 \hat{\phi} 
   \right) Z(J,\eta,\eta^*,\lambda) \, . \nonumber
\ea 

\noindent  Using these relations, Eq.(\ref{yank}) may be written as
\begin{multline*}
  \hat{A}   \hat{\phi} \hat{\phi}^*Z(J,\eta,\eta^*,\lambda) =\Biggl( \lambda^3 (\hat{A} \hat{\phi} \hat{\phi}^*)^2 + 4 \lambda^2 \hat{A} \hat{\phi} \hat{\phi}^*+ 2 \lambda 
  + 2 \lambda^2  \eta \hat{A} \hat{\phi}^2 \hat{\phi}^* 
  \\
  + 4 \lambda  \eta \hat{\phi} + \lambda \eta^2  \hat{\phi} + J \eta  \hat{\phi}  + \lambda J \hat{A} \hat{\phi} \hat{\phi}^* + J \Biggr) Z(J,\eta,\eta^*,\lambda) \, . \nonumber 
\end{multline*}
\noindent Setting now all the sources to zero and using Eq.(\ref{zprime}), one finds
\be
\frac{d Z(\lambda)}{d\lambda}  =\lambda^3  \frac{d^2 Z}{d\lambda^2} (\lambda)  + 4 \lambda^2 \frac{d Z(\lambda) }{d \lambda}  + 2 \lambda Z(\lambda) \, ,
 \label{diffz}
\ee
\noindent  where it is understood that 
\be Z(\lambda) := Z(J,\eta,\eta^*,\lambda) \biggr|_{J=\eta=\eta^*=0} \, .
\ee
\noindent The last step is to rewrite Eq.(\ref{diffz}) as a differential equation  
for $M_0$  by using Eq.(\ref{mno})  to replace 
\be Z(\lambda) = Z(0) \exp M_0(\lambda) \, . \ee
\noindent Using this into Eq.(\ref{diffz}), we finally obtain Eq.(\ref{wprime}).

\end{document}